\let\@font@warningori\@font@warning
\newcommand\shutup{\def\@font@warning##1{}}
\newcommand\youcanspeak{\let\@font@warning\@font@warningori}
\newtheorem{theorem}{Theorem}
\newtheorem{corollary}[theorem]{Corollary}
\newtheorem{remark}{Remark}
\newtheorem{definition}{Definition}
\providecommand{\setN}{\mathbb{N}}
\providecommand{\setZ}{\mathbb{Z}}
\providecommand{\setQ}{\mathbb{Q}}
\providecommand{\setR}{\mathbb{R}}
\newcommand{\lindisc}{\textrm{lindisc}}
\newcommand{\disc}{\textrm{disc}}
\newcommand{\herdisc}{\textrm{herdisc}}
\newcommand{\Dperm}{D^{\textrm{perm}}}
\newcommand{\Dmon}{D^{\textrm{mon}}}
\title{\Large Bin Packing via Discrepancy of Permutations\footnote{A preliminary version of this paper appeared in SODA'11 \cite{BinPackingViaPermutationsSODA2011}.}}
\date{}
\author{
Friedrich Eisenbrand\footnote{EPFL, Lausanne, Switzerland. Email: {\tt{friedrich.eisenbrand@epfl.ch}}. Supported by the Swiss National Science Foundation (SNF).} \\
\and
D{ö}m{ö}t{ö}r P{á}lv{ö}lgyi\footnote{Eötvös Loránd University (ELTE), Budapest, Hungary. Email: {\tt{dom@cs.elte.hu}}} \\
\and 
Thomas Rothvoß\footnote{M.I.T., Cambridge, USA. Email: {\tt{rothvoss@math.mit.edu}}. Supported by  the  German Research Foundation (DFG) within the Priority Program 1307 ``Algorithm Engineering'', by the Alexander von Humboldt Foundation within the Feodor Lynen program, by ONR grant N00014-11-1-0053 and by NSF contract CCF-0829878.} \\
}
\begin{document}

\maketitle

\begin{abstract} 
A well studied special case of \emph{ bin packing} is the \emph{3-partition  problem},
 where $n$ items of size $>\frac{1}{4}$ have to
be packed in a minimum number of bins of capacity one.  The famous
\emph{Karmarkar-Karp algorithm} transforms a fractional solution of a
suitable LP relaxation for this problem into an integral  solution that
requires at most $O(\log n)$ additional bins.

The \emph{three-permutations-problem} of Beck is the following.
Given any 3 permutations on $n$ symbols, color the symbols
 red and blue, such that in any interval of any of those
permutations, the number of red and blue symbols is roughly the same. 
The necessary difference is called the \emph{discrepancy}. 

We establish a surprising connection between bin packing and Beck's
problem: The additive integrality
gap of the { 3-partition} linear programming relaxation can be bounded
by the discrepancy of 3 permutations. 

This connection yields an alternative method to establish an $O(\log n)$ bound on the 
additive integrality gap of the {3-partition}. 
Reversely, making use of a recent example of 3 permutations, for which a discrepancy
of $\Omega(\log n)$ is necessary, we prove the following: 
The $O(\log^2 n)$ upper bound on the additive gap for bin packing with arbitrary item sizes cannot
be improved by any technique that is based on rounding up items. This lower bound holds
for a large class of algorithms including the Karmarkar-Karp procedure.
\end{abstract}




\section{Introduction}

The \emph{bin packing} problem is the following. Given 
 $n$ items of size $s_1,\ldots,s_n \in [0,1]$ respectively, the goal is 
to pack these items in as  few bins of capacity one as possible.
 Bin packing is a fundamental problem in
Computer Science with  numerous applications in theory and
practice. 


The development of heuristics for bin packing with better and better
performance guarantee is an important  success story in  the field of
\emph{Approximation Algorithms}. 
Johnson~\cite{Johnson73,JohnsonFFD74} has shown that the  \emph{First Fit}
algorithm  requires  at most $1.7\cdot OPT +
1$ bins and that  \emph{First Fit Decreasing} yields a solution with 
$\frac{11}{9} OPT + 4$
bins (see~\cite{FFDtightBound-Dosa07} for a tight bound of $\frac{11}{9} OPT + \frac{6}{9}$). 
An important step forward was made by  Fernandez de la Vega and
Luecker~\cite{deLaVegaLueker81} who provided an asymptotic polynomial
time approximation scheme for bin packing. The \emph{rounding
  technique} that is introduced in their paper has been very
influential in the design of PTAS's for many other difficult
combinatorial optimization problems. 


In 1982, Karmarkar and Karp~\cite{KarmarkarKarp82} proposed an
approximation algorithm for bin packing  that can be analyzed to yield
a solution using at most $OPT + O (\log^2 n)$ bins. This seminal
procedure is based 
on the  \emph{Gilmore Gomory LP
  relaxation}~\cite{Gilmore-Gomory61,TrimProblem-Eiseman1957}: 
\begin{equation} \label{eq:1}
  \begin{array}{lcl}
\min \sum_{p\in\mathcal{P}} x_p && \\
 \sum_{p\in\mathcal{P}} p\cdot x_p &\geq& \mathbf{1}  \\
 x_p &\geq& 0 \quad \forall p\in\mathcal{P} 
    
  \end{array}\tag{LP}
\end{equation}
Here $\mathbf{1} = (1,\ldots,1)^T$ denotes the all ones vector and
$\mathcal{P} = \{ p\in\{0,1\}^n \colon s^Tp \leq 1\}$ is the set of all
feasible \emph{patterns}, i.e.  every vector in $\mathcal{P}$ denotes
a feasible way to pack one bin.  Let $OPT$ and $OPT_f$ be the value of
the best integer and fractional solution  respectively.  The linear
program \eqref{eq:1}  has an
exponential number of
variables 
but still one can compute a basic solution $x$ with $\mathbf{1}^Tx \leq
OPT_f + \delta$ in time polynomial in $n$ and
$1/\delta$~\cite{KarmarkarKarp82} using the Gr{ö}tschel-Lovász-Schrijver
variant of the Ellipsoid method~\cite{GLS-algorithm-Journal81}. 

The procedure of Karmarkar and Karp \cite{KarmarkarKarp82} yields an
\emph{additive integrality gap} of $O(\log^2 n)$, i.e. $OPT \leq OPT_f +
O(\log^2 n)$, see also~\cite{Williamson98}.  This corresponds to an asymptotic FPTAS\footnote{An  asymptotic fully polynomial time approximation scheme (AFPTAS) is an  approximation algorithm that produces solutions of cost at most  $(1+\varepsilon)OPT + p(1/\varepsilon)$ in time polynomial in $n$ and $1/\varepsilon$, where  also $p$ must be a polynomial.} for { bin packing}.
The authors in \cite{BinPacking-MIRUP-ScheithauerTerno97}  conjecture
that even $OPT \leq \lceil OPT_f\rceil + 1$ holds and this even if one replaces
the right-hand-side $\mathbf{1}$ by any other positive integral vector
$b$.  
This \emph{Modified Integer Round-up Conjecture} was proven by
Seb{\H{o}} and  Shmonin~\cite{MIRUPproofForDim7-SeboShmonin09}
if the number of different item sizes is at most $7$.  We would like
to mention that Jansen and
Solis-Oba~\cite{DBLP:conf/ipco/JansenS10} recently provided an $OPT+1$
approximation-algorithm for  bin packing if the number of item
sizes is fixed. 

Much of the hardness of { bin packing} seems to appear already in the
special case of \emph{ 3-partition}, where $3n$ items of size
$\frac{1}{4} < s_i < \frac{1}{2}$ with $\sum_{i=1}^{3n} s_i = n$ have
to be packed.  It is strongly $\mathbf{NP}$-hard to distinguish
between $OPT \leq n$ and $OPT \geq n+1$~\cite{GareyJohnson79}.  No
stronger hardness result is known for general { bin packing}.  A
closer look into~\cite{KarmarkarKarp82} reveals that, with the
restriction $s_i > \frac{1}{4}$, the Karmarkar-Karp algorithm uses
$OPT_f + O(\log n)$ bins\footnote{The \emph{geometric grouping}
  procedure (Lemma~5 in \cite{KarmarkarKarp82}) discards items of size
  $O(\log \frac{1}{s_{\min}})$, where $s_{\min}$ denotes the size of
  the smallest item. The geometric grouping is applied $O(\log n)$
  times in the Karmarkar-Karp algorithm.  The claim follows by using
  that $s_{\min} > \frac{1}{4}$ for { 3-partition}.}.

\subsection*{Discrepancy theory}

Let $[n] := \{ 1,\ldots,n\}$ and consider a set system  $\mathcal{S} \subseteq
2^{[n]}$ over the ground set $[n]$. 
A \emph{coloring} is a mapping $\chi : [n] \to \{ ± 1\}$.  In \emph{discrepancy
theory}, one aims 
at finding colorings  for which the difference of ``red'' and ``blue''
elements in all sets is as small as possible. Formally, the
\emph{discrepancy} of a set system $\mathcal{S}$ is defined as
\[
 \disc(\mathcal{S}) = \min_{\chi : [n] \to \{ ± 1\}} \max_{S\in \mathcal{S}} |\chi(S)|.
\]
where  $\chi(S) = \sum_{i\in S} \chi(i)$. A random coloring
provides an easy bound of $\disc(\mathcal{S}) \leq O(\sqrt{n \log
  |\mathcal{S}|})$ \cite{GeometricDiscrepancy-Matousek99}.  The famous
``\emph{Six Standard Deviations suffice}'' result
of Spencer~\cite{SixStandardDeviationsSuffice-Spencer1985} improves this to
$\disc(\mathcal{S}) \leq O(\sqrt{n
  \log(2|\mathcal{S}|/n)})$. 

If every element appears in at
most $t$ sets, then the \emph{Beck-Fiala
Theorem}~\cite{IntegerMakingTheorems-BeckFiala81} yields
$\disc(\mathcal{S}) < 2t$. The same authors conjecture that in fact
$\disc(\mathcal{S}) = O(\sqrt{t})$.
Srinivasan~\cite{DiscrepancyBound-sqrtT-logN-SrinivasanSODA97} gave a
$O(\sqrt{t} \log n)$ bound, which was improved by
Banaszczyk~\cite{BalancingVectors-Banaszczyk98} to $O(\sqrt{t \log
  n})$.  Many such discrepancy proofs are purely existential, for
instance due to the use of the pigeonhole principle. In a very recent
breakthrough Bansal~\cite{DiscrepancyMinimization-Bansal2010} showed
how to obtain the desired colorings for the
Spencer~\cite{SixStandardDeviationsSuffice-Spencer1985} and
Srinivasan~\cite{DiscrepancyBound-sqrtT-logN-SrinivasanSODA97} bounds
by considering a random walk, guided by the solution of a semidefinite
program.  

For several decades, the following
\emph{three-permutations-conjecture} or simply \emph{Beck's
  conjecture} (see Problem 1.9 in \cite{DiscrepancyTheory-BeckSos95}) was open: 
\begin{quote}
  Given any 3 permutations on $n$ symbols, one can color the symbols
  with red and blue, such that   in every interval of every of those
  permutations, the number of red and   blue symbols differs by $O(1)$. 
\end{quote}
Formally, a set of permutations  $\pi_1,\ldots,\pi_k : [n] \to [n]$  induces a
set-system\footnote{We only consider intervals of permutations that start
 from the first element. Since any interval is the difference of two
 such prefixes, this changes the discrepancy by a factor
of at most $2$. } 
 $$\mathcal{S} =  \{ \{ \pi_i(1),\ldots,\pi_i(j)\}  \colon  j=1,\ldots,n; \; i=1,\ldots,k\}.$$ 
We denote the maximum discrepancy 
of such a set-system induced by   $k$ permutations over $n$ symbols as
$\Dperm_k(n)$, then Beck's conjecture can be rephrased as $\Dperm_3(n) = O(1)$. %
One can provably upper bound $\Dperm_3(n)$ by $O(\log n)$ and more
generally $\Dperm_k(n)$ can be bounded by~$O(k\log
n)$~\cite{DiscrepancyOf3Permutations-Bohus90} and by $O(\sqrt{k}\log
n)$~\cite{DiscrepancyBound-sqrtT-logN-SrinivasanSODA97,DiscrepancyOfPermutations-SpencerEtAl}
using the so-called \emph{entropy method}.

But very recently a counterexample to Beck's conjecture was found by Newman and Nikolov~\cite{CounterexampleToBecksConjecture-NewmanNikolov2011}
(earning a prize of 100 USD offered by Joel Spencer)\footnote{The counterexample was announced few months after SODA'11. As a small anecdote, both authors of \cite{CounterexampleToBecksConjecture-NewmanNikolov2011} had a 
joint paper~\cite{HardnessForDiscrepancySODA2011} on a related topic, 
which was presented in the same session of SODA'11 as the conference version of this paper.}.
In fact, they fully settle the question by proving that $\Dperm_3(n) = \Theta(\log n)$.

\subsection*{Our contribution}

The first result of this paper is the following theorem.
\begin{theorem}
  \label{thr:1}
  The additive integrality gap of the linear
  program \eqref{eq:1} restricted to 3-partition instances is
  bounded by $6\cdot \Dperm_3(n)$. 
\end{theorem}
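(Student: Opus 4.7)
The plan is to take a basic fractional LP solution, isolate its fractional remainder, build three permutations of items from the support of this remainder, and then round using the discrepancy-of-three-permutations bound.

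I start with a basic feasible solution $x^*$ of \eqref{eq:1} with $\mathbf{1}^T x^* \leq OPT_f$. A basic solution has support of size at most the number of covering constraints, so at most $3n$ patterns are used positively; since $s_i > 1/4$, each such pattern contains two or three items, and after padding patterns of size two with a zero-size dummy item I may assume exactly three items per pattern. Decompose $x^* = \lfloor x^* \rfloor + f$ with $f \in [0,1)^{\mathcal{P}}$, and open $\mathbf{1}^T \lfloor x^* \rfloor \leq OPT_f$ integer bins; the task reduces to rounding $f$ into a small number of additional bins to cover the residual item demand.

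Within each positive pattern in the support of $f$, I label its three items as small, medium and large by size. Listing the supported patterns in some order, these labels produce three sequences of items. After a preprocessing step that redistributes fractional weight between supported patterns so as to force each item to appear exactly once in each role, the three sequences become three permutations $\pi_1,\pi_2,\pi_3$ of the item set, at the cost of at most a constant additive number of bins. By the definition of $\Dperm_3(n)$, there is a coloring $\chi$ of the items with discrepancy at most $\Dperm_3(n)$ on every prefix of every $\pi_k$, hence at most $2\Dperm_3(n)$ on every interval. Interpreting $\chi$ as signs on the supported patterns and rounding each $f_p$ up or down accordingly yields an integer coverage of each item whose deviation from the fractional coverage is at most $2\Dperm_3(n)$ in each of the three roles. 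Summing, at most $6\Dperm_3(n)$ items remain uncovered, and each can be packed in its own extra bin, giving $OPT \leq OPT_f + 6\Dperm_3(n)$.

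The technically delicate step is the construction of the three permutations from the fractional support. The support graph of $f$ is not inherently organized into three disjoint perfect matchings on items, and reshaping it into such a structure requires a careful flow-style rebalancing of LP weights that preserves feasibility of \eqref{eq:1} while only inflating the objective by a constant. Pinning down this reduction so that the final constant in the theorem is exactly $6$, rather than a larger multiple, is the heart of the argument.
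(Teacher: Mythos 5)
Your high-level idea (three ``roles'' per pattern, three permutations, discrepancy coloring) overlaps conceptually with the paper, but the realization is different and has serious gaps. The paper does \emph{not} build permutations of items from the support of the fractional residual. It works with the entire basic solution $y$, builds the $n\times n$ $3$-monotone cumulative matrix $A$ with $A_{ij}$ counting how many of the $i$ \emph{largest} items lie in pattern $p_j$ (so $Ay=(1,2,\ldots,n)^T$), rounds $y$ to $x\in\{0,1\}^n$ with $\|Ax-Ay\|_\infty\le \lindisc(A)$, and then proves feasibility by a Hall's-theorem matching argument after adding $\lindisc(A)$ spare bins for the largest item (Theorem~\ref{thm:BinPackingGap}). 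The permutations arise on the \emph{pattern} side: decomposing $A$ into three $1$-monotone layers $A^{\ell}$, each layer's columns (patterns) are totally ordered by domination, giving three permutations \emph{of patterns}, and $\lindisc(A)\le 2\herdisc(A)\le 3\cdot\Dperm_3(n)$ via Lov\'asz--Spencer--Vesztergombi (Theorems~\ref{thm:herdisc}, \ref{thm:kMonDiscAtMostPermDisc}). These orderings always exist; there is no preprocessing needed.

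Your proposal has at least three concrete gaps. First, the ``redistribution of fractional weight so that each item appears exactly once in each role'' is asserted but not shown, and it is not true in general: an item may be the smallest item of every pattern containing it and the largest of none, and no flow-style rebalancing is given that fixes this at constant additive cost. The paper sidesteps this by ordering patterns, not items. Second, your coloring $\chi$ lives on items, but you then round $f_p$ by ``interpreting $\chi$ as signs on the supported patterns''; these are different ground sets and the three bijections $\pi_\ell$ give three incompatible identifications, so the rounding is not well-defined. Third, the leap from ``prefix discrepancy $\le\Dperm_3(n)$'' to ``at most $6\Dperm_3(n)$ items remain uncovered'' is unsupported: discrepancy controls \emph{prefix} deficits, and turning that into a feasible covering is exactly what the paper's combination of size-sorting, the cumulative identity $Ay=(1,\ldots,n)^T$, the $\Dmon_k(n)$ extra bins for item~1, and Hall's theorem achieves. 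A related difficulty is that peeling off $\lfloor x^*\rfloor$ first destroys $By=\mathbf{1}$, the invariant the paper's argument rests on, and you do not replace it with anything usable on the residual $f$.
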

This result is constructive in the following sense. If 
one can find a $\alpha$ discrepancy coloring
for any three permutations in polynomial time, then there is an $OPT + O(\alpha)$ approximation
algorithm for 3-partition. 

The proof of Theorem~\ref{thr:1} itself  is  via  two steps.

\begin{enumerate}[i)]
\item We show that the additive
  integrality gap of~\eqref{eq:1} is at most twice the maximum
  \emph{linear discrepancy} of a \emph{$k$-monotone matrix} if all
  item sizes are larger than  $1/(k+1)$ 
  (Section~\ref{sec:from-bin-packing}). 
  This step is based on matching techniques and \emph{Hall's
    theorem}. 
   \label{item:1}  
\item We then show that the linear discrepancy of a $k$-monotone
  matrix is at most $k$ times the discrepancy of $k$
  permutations (Section~\ref{sec:BoundingMatDiscByPermDisc}). This result uses
  a theorem of Lovász, Spencer and   Vesztergombi. \label{item:2} 
\end{enumerate}
The theorem then follows by setting $k$ equal to $3$ in the above
steps. 

Furthermore, we show  that the discrepancy of $k$ permutations is at most $4$
times the linear discrepancy of a $k$-monotone matrix.  Moreover in Section~\ref{sec:kLogN-bound-for-k-mon-matrices}, we
provide a $5k\cdot \log_2(2\min\{m,n\})$ upper bound on  the linear
discrepancy of a $k$-monotone $n×m$-matrix.

Recall that most approximation algorithms for bin packing or corresponding 
generalizations rely on ``rounding up items'', i.e. they select some patterns
from the support of a fractional solution which form a valid solution to a 
dominating instance. Reversing the above connection, we can show that no
algorithm that is only based on this principle can obtain an additive 
integrality of $o(\log n)$ for item sizes $>\frac{1}{4}$ and $o(\log^2 n)$
for arbitrary item sizes (see Section~\ref{sec:ImplicationsFromCounterexample}).
This still holds if we allow to discard and greedily pack items. More precisely:

\begin{theorem}
For infinitely many $n$, there is a bin packing instance $s_1 \geq \ldots \geq s_n > 0$ with 
a feasible fractional (LP) solution $y \in [0,1]^{\mathcal{P}}$ such that the following holds: 
Let $x \in \setZ_{\geq 0}^{\mathcal{P}}$ be an integral solution and $D \subseteq [n]$ be those items that are not covered by $x$ 
with the properties:
\begin{itemize}
\item \emph{Use only patterns from fractional solution:} $\textrm{supp}(x) \subseteq supp(y)$.
\item \emph{Feasibility:} $\exists \sigma : [n] \setminus D \to [n]$ with $\sigma(i) \leq i$ and $\sum_{p:i \in p} x_p \geq |\sigma^{-1}(i)|$ for all $i \in \{ 1,\ldots,n\}$.
\end{itemize}
Then one has $\mathbf{1}^T x + 2\sum_{i \in D} s_i \geq \mathbf{1}^T y + \Omega(\log^2 n)$.
\end{theorem}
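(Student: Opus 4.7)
My approach reverses the two-step reduction that proves Theorem~\ref{thr:1}, plugging in the Newman--Nikolov counterexample $\Dperm_3(N) = \Omega(\log N)$ at many scales in order to force an $\Omega(\log^2 n)$ additive loss against any algorithm that can only round items up (optionally discarding some into $D$ and greedily packing them). The key principle is that the rounding-up constraint $\sigma(i)\leq i$ together with the restriction $\textrm{supp}(x)\subseteq \textrm{supp}(y)$ will translate an integral solution back into a $\{\pm 1\}$-coloring of an underlying 3-permutation system, whose discrepancy then lower bounds the additive gap.

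\textbf{Step 1: single-scale base construction.} Start from three permutations $\pi_1,\pi_2,\pi_3$ on $N$ symbols provided by \cite{CounterexampleToBecksConjecture-NewmanNikolov2011} with $\disc(\mathcal{S})\geq c\log N$. Build a 3-partition instance exactly as in the correspondence developed in Section~\ref{sec:from-bin-packing}: items come in three classes (one per permutation), each of size slightly above $\frac{1}{4}$, and the canonical fractional solution $y$ uses $N$ bins each containing one item of every class. Because all sizes exceed $\tfrac14$, any feasible pattern has at most three items, and the support of $y$ encodes the three permutations. Using the reverse direction stated in the paper (linear discrepancy of a 3-monotone matrix is at most $4\,\Dperm_3$, combined with Step~\ref{item:1} of the proof of Theorem~\ref{thr:1}), an integral $x$ with $\textrm{supp}(x)\subseteq \textrm{supp}(y)$ and a valid $\sigma$ induces a $\{\pm1\}$-coloring of $[N]$ whose prefix-discrepancies control $\mathbf{1}^T x - \mathbf{1}^T y$. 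The discard set contributes $2\sum_{i\in D}s_i \geq \tfrac12|D|$, which plays the role of unbalanced symbols and can be absorbed into the coloring by charging one extra bin per two discarded items. Hence at a single scale we obtain $\mathbf{1}^T x + 2\sum_{i\in D}s_i \geq \mathbf{1}^T y + \Omega(\log N)$.

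\textbf{Step 2: amplification to $\Omega(\log^2 n)$.} Stack $K=\Theta(\log n)$ independent hard 3-partition instances at geometrically decreasing sizes: at scale $j=0,1,\ldots,K-1$, take the instance from Step~1 with $N_j$ symbols and rescale all item sizes into the interval $\bigl(\tfrac{1}{4\cdot 2^{j+1}},\tfrac{1}{4\cdot 2^j}\bigr]$, choosing the $N_j$ so that each scale occupies an $\Omega(1)$ fraction of total bin mass and so that the $N_j$ are still polynomial in $n$. The fractional solution $y$ is the disjoint union of the scaled fractional solutions, so $\mathbf{1}^T y$ is the sum of per-scale fractional values. Because $\sigma(i)\leq i$ and items are sorted by decreasing size, an item at scale $j$ may only be rounded up to an item at scale $\leq j$; the geometric gap between scales together with the capacity-$1$ bin constraint prevents items from a strictly larger scale from fitting together with scale-$j$ items. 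Consequently, every scale must be solved essentially in isolation using only its own part of $\textrm{supp}(y)$, and the single-scale lower bound of Step~1 applies to each. Summing over $K$ scales yields $\Omega(K\log n)=\Omega(\log^2 n)$.

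\textbf{Main obstacle.} The delicate point is Step~2: one must certify that the $K$ scales really behave independently under the rounding-up rule, and that the discard budget $2\sum_{i\in D}s_i$ cannot be moved across scales to cancel the gap at one scale using slack from another. This needs a careful choice of scale widths (so small items cannot materially fill a bin opened for large items, ruling out cross-scale absorption) and a charging scheme that, for each scale $j$ separately, converts rounding-up moves and discards into prefix-discrepancy certificates for the Newman--Nikolov system at that scale. Step~1 is comparatively routine once the matching/Hall argument of Section~\ref{sec:from-bin-packing} is run in reverse, but the exact accounting of discarded items against prefix imbalances of the three permutations still has to be done carefully to expose an $\Omega(\log N)$ lower bound rather than just a weaker $\Omega(1)$.
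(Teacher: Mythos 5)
Your high-level plan matches the paper's: build a single-scale hard instance from the Newman--Nikolov permutations, then stack $\Theta(\log n)$ geometrically shrinking copies and argue that the scales must be handled independently. But the two places you yourself flag as delicate are exactly where the paper does real work, and your proposal does not close them.

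\textbf{Odd-integer colorings, not $\pm 1$.} In the single-scale step you say the integral solution translates into a $\{\pm1\}$-coloring. With $y=(1/2,\ldots,1/2)$ and $x\in\setZ_{\geq 0}$, the induced coloring $\chi(i)=2x_i-1$ ranges over the odd integers $\geq -1$, not just $\{\pm1\}$. The Newman--Nikolov lower bound as stated is for $\{\pm 1\}$-colorings; the paper has to observe (Theorem~\ref{thm:NewmanNikolov} and the footnote there, plus Corollary~\ref{cor:LogNDiscString}) that the proof goes through for colorings by odd integers. Without that extension the single-scale contradiction does not follow. Also, your Step 1 invokes the ``reverse'' bound $\Dperm_k \leq 4\Dmon_k$ and a reversed Hall argument; the paper's actual Step 1 does neither --- it works directly with the necessary covering inequality~\eqref{eq:CoveringInequality} and the explicit pattern matrix $B$ built from the concatenated string $\Sigma$.

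\textbf{Cross-scale independence.} Your rescaling puts scale-$j$ items in $(\tfrac{1}{4\cdot 2^{j+1}},\tfrac{1}{4\cdot 2^j}]$ but keeps three items per pattern. Those bins are then almost empty (total size $\approx 3/2^{j+2}$), so $\mathbf{1}^Ty$ vastly exceeds $\sum_i s_i$, and the relation ``additive gap $\gtrsim$ waste $+$ $s(D)$'' --- which is what actually bounds cross-scale remapping --- is lost. Concretely, if scales $j'<j$ over-open bins by a total of $\Delta$, they produce $3\Delta$ extra slots, all of which can legally absorb scale-$j$ items under $\sigma(i)\leq i$; since each extra bin at a small scale costs you $1$ unit of $\mathbf{1}^Tx$ but donates $3$ slots, deficits can cascade across scales without a matching increase in the additive gap. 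The paper avoids this by giving each item type in group $j$ multiplicity $2^{j-1}$ and using the block-diagonal pattern matrix $C$ with blocks $2^{j-1}B$, so that \emph{every} pattern in every group fills its bin to $\approx 1$. This tightness is what lets the paper deduce waste $+$ $s(D) = O(\log^2 n)$ from the assumed additive gap, then bound cross-scale assignments by charging each to $\geq \tfrac12$ of the item's size in waste, move them into an enlarged discard set $D'$, and finally apply a Markov/averaging argument over $j$ to find a single group where both $|\mathbf{1}^Tx^j-\mathbf{1}^Ty^j|$ and $s(D_j')$ are $o(\log n)$, contradicting the single-scale bound. None of the multiplicity device, the tight-packing/waste accounting, or the averaging step appears in your proposal, and these are precisely what your ``main obstacle'' paragraph correctly identifies as the crux.
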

Improving the Karmarkar-Karp algorithm has been a longstanding open
problem for many decades now. Our result shows that the recursive
rounding procedure of the algorithm is optimal.
In order to break the $O(\log^2 n)$ barrier it does not suffice to
consider only the patterns that are contained in an initial fractional
solution as it is the case for the Karmarkar-Karp algorithm.




\section{Preliminaries}
\label{sec:furth-prel-discr}





We first  review some  further necessary  preliminaries on discrepancy
theory. We refer to~\cite{GeometricDiscrepancy-Matousek99} for further details. 

If $A$ is a matrix, then we denote the $i$th row of $A$ by $A_i$ and
the $j$th entry in the $i$th row by $A_{ij}$.  The notation of
discrepancy can be naturally extended to real matrices $A\in\setR^{m×n}$
as
\[
  \disc(A) := \min_{x\in\{ 0,1\}^{n}} \|A(x - 1/2 \cdot \mathbf{1})\|_{\infty},
\]
see, e.g.~\cite{GeometricDiscrepancy-Matousek99}.
Note that if $A$ is the incidence matrix of a set system $\mathcal{S}$
(i.e. each row of $A$ corresponds to the characteristic vector of a
set $S \in \mathcal{S}$), then $\disc(A) =
\frac{1}{2}\disc(\mathcal{S})$, hence this notation is consistent ---
apart from the $\frac{1}{2}$ factor.

The \emph{linear discrepancy} of a matrix $A\in\setR^{m×n}$  is defined as
\[
  \lindisc(A) := \max_{y\in[0,1]^n} \min_{x\in\{ 0,1\}^n} \|Ax - Ay\|_\infty.
\]
This value can be also described by a two player game. The first
player chooses a fractional vector $y$, then the second player chooses
a $0/1$ vector $x$. The goal of the first player is to maximize, of
the second to minimize $\|Ax - Ay\|_\infty$.
The inequality  $\disc(A) \leq \lindisc(A)$ holds by choosing $y := (1/2,\ldots,1/2)$.
One more notion of defining the ``complexity'' of a set system or a matrix is that of
the \emph{hereditary discrepancy}:
\[
 \herdisc(A) := \max_{B \textrm{ submatrix of } A} \disc(B).
\]
Notice that one can assume that $B$ is formed by choosing
a subset of the columns of $A$.  
This parameter is obviously at least $\disc(A)$ since we can choose
$B:=A$ and in \cite{DiscrepancyofSetSystemsAndMatrices-LSV86} even an
upper bound for $\lindisc(A)$ is proved (see again
\cite{GeometricDiscrepancy-Matousek99} for a recent description).

\begin{theorem}[(Lovász, Spencer, Vesztergombi)] \label{thm:herdisc}
For $A \in \setR^{m×n}$ one has  $$\lindisc(A) \leq 2\cdot\herdisc(A).$$
\end{theorem}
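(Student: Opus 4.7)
The plan is to prove the inequality by a dyadic ``bit-by-bit'' rounding procedure, in which each step of the rounding is controlled by the hereditary discrepancy. I would first treat the case of dyadic rational $y$ with denominator $2^k$, and then extend to arbitrary $y \in [0,1]^n$ by a continuity/approximation argument.

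\medskip

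\textbf{Reduction to dyadic $y$.} Since the minimization in the definition of $\lindisc(A)$ ranges over the finite set $\{0,1\}^n$, the map $y \mapsto \min_{x\in\{0,1\}^n}\|A x - A y\|_\infty$ is continuous on $[0,1]^n$. Hence it suffices to establish the bound for every $y$ of the form $y = a/2^k$ with $a \in \{0,1,\ldots,2^k\}^n$, since such vectors are dense in $[0,1]^n$.

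\medskip

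\textbf{One halving step.} Given such a $y$, let $J = \{i : a_i \text{ is odd}\}$ be the indices where the last binary bit of $y_i$ is $1$, and let $B$ be the column-submatrix of $A$ formed by the columns indexed by $J$. By definition of hereditary discrepancy there is a vector $z \in \{0,1\}^{J}$ with $\|B(z - \tfrac12 \mathbf{1})\|_\infty \le \herdisc(A)$. Define $y' \in [0,1]^n$ by rounding each $y_i$ with $i \in J$ down to $(a_i-1)/2^k$ if $z_i = 0$ and up to $(a_i+1)/2^k$ if $z_i = 1$, and leaving coordinates outside $J$ unchanged. Then $2^{k-1} y' \in \mathbb{Z}^n$, and
\[
  \|A y' - A y\|_\infty \;=\; \tfrac{2}{2^k}\,\|B(z - \tfrac12\mathbf{1})\|_\infty \;\le\; \frac{\herdisc(A)}{2^{k-1}}.
\]

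\medskip

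\textbf{Iteration and summation.} Iterating this halving step $k$ times, starting from $y^{(0)} := y$ and producing $y^{(1)}, y^{(2)}, \ldots, y^{(k)}$, the final vector $x := y^{(k)}$ lies in $\{0,1\}^n$. The triangle inequality and the geometric series give
\[
  \|A x - A y\|_\infty \;\le\; \sum_{j=1}^{k} \|A y^{(j)} - A y^{(j-1)}\|_\infty \;\le\; \sum_{j=1}^{k} \frac{\herdisc(A)}{2^{k-j}} \;\le\; 2\,\herdisc(A).
\]
Combined with the reduction in the first paragraph this proves $\lindisc(A) \le 2\,\herdisc(A)$.

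\medskip

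\textbf{Main obstacle.} The only subtle point is verifying the halving step: one has to notice that the difference vector $y' - y$ is supported on $J$ with coordinates $\pm 2^{-k}$, and that the sign pattern is exactly $2z - \mathbf{1}$, so that the bound from hereditary discrepancy on $B$ translates into the factor $2/2^k$. Once this is observed, the geometric series does the work, and the factor $2$ in the conclusion is the sum $\sum_{j\ge 0} 2^{-j} = 2$.
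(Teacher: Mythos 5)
The paper states Theorem~\ref{thm:herdisc} as a cited result (from Lov\'asz--Spencer--Vesztergombi, with Matou\v{s}ek's book referenced for exposition) and gives no proof of its own. Your dyadic bit-by-bit rounding argument is precisely the standard proof of this theorem, and it is correct: the continuity reduction to dyadic $y$, the identification $Ay'-Ay = \tfrac{2}{2^k}B\bigl(z-\tfrac12\mathbf{1}\bigr)$ in the halving step, and the geometric series $\sum_{j\ge 0}2^{-j}=2$ all check out.
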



\section{Bounding the gap via the discrepancy  of monotone matrices} 
\label{sec:from-bin-packing}

A matrix $A$ is called $k$\emph{-monotone} if all its column vectors have non-decreasing
entries from $0,\ldots,k$. In other words $A \in \{0,\ldots,k\}^{m×n}$ and $A_{1j} \leq \ldots \leq A_{mj}$ for any column $j$.
We denote the maximum linear discrepancy of such matrices by
\[
  \Dmon_k(n) := \max_{\substack{A\in\setZ^{m×n} \\ k\textrm{-monotone}}} \lindisc(A).
\]
The next theorem establishes step~\ref{item:1}) mentioned in the
introduction. 

\begin{theorem} \label{thm:BinPackingGap}
Consider the linear program \eqref{eq:1} and suppose that the item
sizes satisfy  $s_1,\ldots,s_n > \frac{1}{k+1}$.  
Then 
\[
  OPT \leq OPT_f + \left(1 + \frac{1}{k}\right) \Dmon_k(n).
\]
\end{theorem}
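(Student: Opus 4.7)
The plan is to round a basic optimal LP solution and then repair cumulative feasibility via a matching argument. Let $x^*$ be a basic optimum of \eqref{eq:1}, so $\mathbf{1}^T x^* = OPT_f$ and $|\mathrm{supp}(x^*)| \le n$, and write $x^* = \lfloor x^*\rfloor + y$ with $y \in [0,1)^{\mathrm{supp}(x^*)}$. I would keep $\lfloor x^*\rfloor$ bins as they are, round $y$ to a $0/1$ vector $z$ using the linear discrepancy of a carefully chosen $k$-monotone matrix, and then show that the integer candidate $\hat{x} = \lfloor x^*\rfloor + z$ is feasible up to a small number of items that can be absorbed into extra singleton bins.

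The heart of the argument is building the right monotone matrix. Order the items so that $s_1 \ge \cdots \ge s_n$ and, for each pattern $p \in \mathrm{supp}(x^*)$ and $i \in [n]$, set $C_{ip} = |\{j \le i : p_j = 1\}|$. Because $s_i > \tfrac{1}{k+1}$ forces every feasible pattern to contain at most $k$ items, each column of $C$ is non-decreasing with entries in $\{0,\dots,k\}$, so $C$ is $k$-monotone. A direct application of $\Dmon_k$ to $C$ would control cumulative coverage but not the bin count $\mathbf{1}^T z$, so I append a single extra row equal to $k\cdot\mathbf{1}^T$; the resulting matrix $C'$ is still $k$-monotone (columns non-decreasing with entries in $\{0,\dots,k\}$) and has at most $n$ columns. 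By the definition of $\Dmon_k(n)$ there exists $z \in \{0,1\}^{\mathrm{supp}(x^*)}$ with $\|C'(z-y)\|_\infty \le \Delta := \Dmon_k(n)$. The last row yields $k|\mathbf{1}^T(z-y)| \le \Delta$, i.e.\ $\mathbf{1}^T z \le \mathbf{1}^T y + \Delta/k$, while the top $n$ rows give $|\beta_i(z) - \beta_i(y)| \le \Delta$ for every $i$, where $\beta_i(u) := \sum_p C_{ip} u_p$.

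Finally, I would interpret each bin of $\hat{x} = \lfloor x^*\rfloor + z$ as offering a ``slot'' for each item it originally contains, noting that a slot whose original occupant is item $j$ may be reused by any item of index $\ge j$ since the pattern's total size only decreases. Summing the LP coverage constraints $\sum_p p_j x^*_p \ge 1$ over $j \le i$ gives $\beta_i(x^*) \ge i$, hence $\beta_i(\hat{x}) \ge i - \Delta$ for all $i$. Since the items $\{1,\dots,i\}$ can only use slots whose original item has index $\le i$ --- of which there are exactly $\beta_i(\hat{x}) \ge i - \Delta$ --- the defect version of Hall's theorem produces a matching between items and slots of size at least $n - \Delta$; putting each of the at most $\Delta$ unmatched items in its own extra bin yields a feasible packing using at most $\mathbf{1}^T\hat{x} + \Delta \le OPT_f + \Delta/k + \Delta = OPT_f + (1+1/k)\Dmon_k(n)$ bins. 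The main obstacle is exactly this padding trick: without the extra $k\cdot\mathbf{1}^T$ row, the discrepancy inequality controls only the cumulative slot counts and not the bin count $\mathbf{1}^T z$, and one loses the sharp $(1+1/k)$ factor.
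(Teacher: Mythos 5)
Your proof is correct and takes essentially the same approach as the paper: the cumulative-count matrix $C$ (the paper's $A$) with an appended $(k,\ldots,k)$ row to control $\mathbf{1}^T z$, the rounding via $\Dmon_k(n)$, and the Hall-type matching argument over cumulative slot counts are all identical. The only cosmetic differences are that you explicitly split the basic LP solution into integer and fractional parts before invoking $\lindisc$ (which formally requires a $[0,1]$-vector, a precision the paper's exposition glosses over), and that you apply the defect form of Hall's theorem instead of first padding with $\Dmon_k(n)$ singleton slots for item $1$ and then exhibiting a perfect matching --- these are equivalent reformulations.
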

\begin{proof}
Assume that the item sizes are sorted such that  $s_1 \geq \ldots \geq s_n$.
Let $y$ be any optimum basic solution of \eqref{eq:1} and
let $p_1,\ldots,p_m$ be the list of patterns. 
Since $y$ is a basic solution, its support satisfies
$|\{ i \colon  y_i > 0\}| \leq n$. Hence by deleting unused patterns, we
may assume\footnote{In case that there are less than $n$ patterns, we add empty patterns.} that $m=n$. 

We define  $B = (p_1,\ldots,p_n) \in \{0,1\}^{n×n}$ 
as the matrix composed of the patterns as column vectors. Clearly  $By = \mathbf{1}$. 
%
Let $A$ be the matrix that is defined by $A_i := \sum_{j=1}^i B_j$, again $A_i$ denotes the $i$th row of $A$. 
In other words, $A_{ij}$ denotes the number of items of types $1,\ldots,i$
in pattern $p_j$. Since $By = \mathbf{1}$  we have $Ay =
(1,2,3,\ldots,n)^T$. 
Each column of $A$ is monotone. Furthermore, since no pattern contains
more than $k$ items one has  $A_{ij} \in \{ 0,\ldots,k\}$, thus $A$ is
$k$-monotone. 

We attach a row $A_{n+1} := (k,\ldots,k)$ as the new last row of $A$. 
Clearly $A$ remains $k$-monotone. 
There exists a vector $x\in\{0,1\}^n$ with 
\[
\|Ax - Ay\|_{\infty} \leq \lindisc(A) \leq \Dmon_k(n).
\]
We \emph{buy} $x_i$ times pattern $p_i$ and $\Dmon_k(n)$ times the pattern that
only contains the largest item of size $s_1$.

It remains to show: (1) this yields a feasible solution; (2)
the number of patterns does not exceed the claimed bound of $OPT_f + (1+\frac{1}{k})\cdot\Dmon_k(n)$.

For the latter claim, recall that the constraint emerging from row $A_{n+1}=(k,\ldots,k)$ 
together with $\sum_{i=1}^n y_i = OPT_f$ provides
\[
 k\sum_{i=1}^n x_i \leq 
k\cdot\sum_{i=1}^n y_i + \Dmon_k(n) = k\cdot OPT_f+\Dmon_k(n).
\]
We use this to upper bound the number of opened bins by
\[
\sum_{i=1}^n x_i + \Dmon_k(n) \leq OPT_f + \Big(1+\frac{1}{k}\Big)\cdot\Dmon_k(n). 
\] 
It remains to prove that our integral solution is feasible.
To be more precise, we need to show that any item $i$ can be assigned to a space
reserved for an item of size $s_i$ or larger.

\begin{figure}[ht] 
\begin{center}
\begin{pspicture}(0,0)(3,5)
\psellipse[linecolor=gray,linewidth=1.5pt, linestyle=dashed](-0.1,2.75)(0.75,1.8)\psellipse[linecolor=gray,linewidth=1.5pt, linestyle=dashed](2.1,2.75)(0.75,1.8) \rput(3.4,2.25){\gray{$N(V')$}}

\cnode*(0,4){2pt}{v1} \nput[labelsep=0pt]{180}{v1}{$v_1$}
\cnode*(0,3){2pt}{v2} \nput[labelsep=0pt]{180}{v2}{$v_2$}
\cnode*(0,1.5){2pt}{v3} \nput[labelsep=0pt]{180}{v3}{$v_i$}
\cnode*(0,0){2pt}{v4} \nput[labelsep=0pt]{180}{v4}{$v_n$}
\rput[c](0,0.75){$\vdots$}
\rput[c](0,2.25){$\vdots$}
 \rput(-1.1,2.75){\gray{$V'$}}
\cnode*(2,4){2pt}{u1} \nput[labelsep=1pt]{0}{u1}{$u_1$}  \nput[labelsep=0.8]{0}{u1}{$b_1=B_1x + \Dmon_k(n)$}
\cnode*(2,3){2pt}{u2} \nput[labelsep=1pt]{0}{u2}{$u_2$} \nput[labelsep=0.8]{0}{u2}{$b_2=B_2x$}
\cnode*(2,1.5){2pt}{u3} \nput[labelsep=1pt]{0}{u3}{$u_i$} \nput[labelsep=0.8]{0}{u3}{$b_i=B_ix$}
\cnode*(2,0){2pt}{u4} \nput[labelsep=1pt]{0}{u4}{$u_n$} \nput[labelsep=0.8]{0}{u4}{$b_n=B_nx$}
\rput[c](2,0.75){$\vdots$}
\rput[c](2,2.25){$\vdots$}
\ncline{v1}{u1}
\ncline{v2}{u1}
\ncline{v3}{u1}
\ncline{v4}{u1}
\ncline{v2}{u2}
\ncline{v3}{u2}
\ncline{v4}{u2}
\ncline{v3}{u3}
\ncline{v4}{u3}
\ncline{v4}{u4}

\nput[labelsep=18pt]{90}{v1}{$V$}
\nput[labelsep=18pt]{90}{u1}{$U$}
\end{pspicture}
\end{center}

 \caption{The bipartite graph in the proof of Theorem~\ref{thm:BinPackingGap}}\label{fig:1}
  
\end{figure}
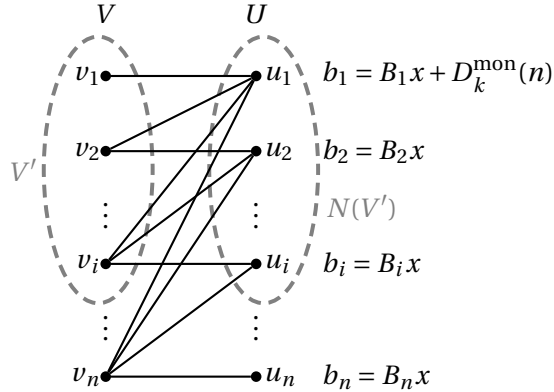

To this end, consider  a bipartite graph with nodes $V=\{v_1,\ldots,v_n\}$ on the left,
representing the items. The  nodes on the right are the
set 
$U = \{ u_1,\ldots,u_n\}$, where each $u_i$ is attributed
with a multiplicity $b_i$ representing  the number of times that
we reserve space for items of size $s_i$ in our solution, see
Figure~\ref{fig:1}. Recall that 
\[
  b_i = \begin{cases} B_ix + \Dmon_k(n) & \textrm{if } i=1 \\ 
B_ix & \textrm{otherwise} \end{cases}. 
\]
We insert an edge $(v_i,u_j)$ for all $i\geq j$. 
The meaning of this edge is the following. One 
can assign item $i$ into the space which is reserved
for item $j$ since $s_i \leq s_j$. We claim that there exists a $V$-perfect
matching, respecting the multiplicities of $U$. 
By \emph{Hall's Theorem}, see, e.g.~\cite{Diestel97}, it suffices to
show for any subset $V' \subseteq V$ 
that the multiplicities of the nodes in $N(V')$ (the neighborhood of
$V'$) are at least $|V'|$. Observe that 
$N(v_{i}) \subseteq N(v_{i+1})$, hence it suffices to prove the claim for sets of
the form $V' = \{ 1,\ldots,i\}$. 
For such a $V'$ one has 
\begin{eqnarray*}
\sum_{u_j\in N(V')} b_j = \Dmon_k(n) + \sum_{j=1}^i B_jx  
= \Dmon_k(n) + A_ix \geq A_iy  = i 
\end{eqnarray*}
and the claim follows.
\end{proof}

\section{Bounding the discrepancy of monotone matrices by the discrepancy of  permutations} 
\label{sec:BoundingMatDiscByPermDisc}

In this section, we show that the linear discrepancy of $k$-monotone
matrices is essentially bounded by the discrepancy of $k$
permutations. 
This corresponds to step~\ref{item:2}) in the proof of the main
theorem.   
By Theorem~\ref{thm:herdisc} it suffices to bound the discrepancy of
$k$-monotone matrices by the discrepancy of $k$ permutations times a
suitable factor. 

We first  explain how one can
associate a permutation  to a $1$-monotone matrix. Suppose that $B \in \{0,1\}^{m×n}$ is
a $1$-monotone matrix. 
If $B^j$ denotes the $j$-th column of
$B$, then the permutation $\pi$ that we associate with $B$ is the
(not necessarily unique) permutation  that satisfies $B^{\pi(1)}\geq B^{\pi(2)}\geq \cdots \geq
B^{\pi(n)}$  where $u\geq v$ for vectors $u,v\in \setR^m$ if $u_i\geq v_i$ for
all $1\leq i\leq m$. On the other hand the matrix $B$ (potentially
plus some extra rows and after merging identical rows) gives the
incidence matrix of the set-system induced by $\pi$. 



A $k$-monotone matrix  $B$
can be decomposed into a sum of $1$-monotone matrices $B^1,\ldots,B^k$. 
Then any $B^{\ell}$ naturally corresponds
to a permutation $\pi_{\ell}$ of the columns as we explained above. 
A low-discrepancy coloring of these permutations yields a coloring that
has low discrepancy for any $B^{\ell}$ and hence also for $B$, as we
show now in detail. 

\begin{theorem} \label{thm:kMonDiscAtMostPermDisc} 
For any $k,n\in\setN$, one has $\Dmon_k(n) \leq k\cdot \Dperm_k(n)$.
\end{theorem}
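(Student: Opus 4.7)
The plan is to apply Theorem~\ref{thm:herdisc} and reduce to bounding the \emph{hereditary} discrepancy of $k$-monotone matrices. Since any column-submatrix of a $k$-monotone matrix is itself $k$-monotone, the task further reduces to establishing the single bound $\disc(B) \leq \tfrac{k}{2}\, \Dperm_k(n)$ for an arbitrary $k$-monotone $B \in \{0,\ldots,k\}^{m \times n}$; once this is in hand, Theorem~\ref{thm:herdisc} will give $\lindisc(B) \leq 2\cdot \herdisc(B) \leq k \cdot \Dperm_k(n)$.

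To prove the discrepancy bound on $B$, I would decompose it into \emph{threshold matrices} $B^1,\ldots,B^k \in \{0,1\}^{m\times n}$ defined by $B^\ell_{ij} := 1$ iff $B_{ij} \geq \ell$. Then $B = \sum_{\ell=1}^k B^\ell$, and column-monotonicity of $B$ forces each $B^\ell$ to be $1$-monotone, with every column of $B^\ell$ equal to the indicator of some suffix of $[m]$. Sorting the $n$ columns of $B^\ell$ by the first row at which they attain the value $1$ (ties broken arbitrarily) yields a permutation $\pi_\ell$ of $[n]$ such that every row $B^\ell_i$, viewed as a subset of $[n]$, is a prefix $\{\pi_\ell(1),\ldots,\pi_\ell(j_{i,\ell})\}$ of $\pi_\ell$.

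Applying the definition of $\Dperm_k(n)$ to the $k$ permutations $\pi_1,\ldots,\pi_k$ produces a single coloring $\chi : [n]\to\{\pm 1\}$ with $|\chi(S)| \leq \Dperm_k(n)$ for every prefix $S$ of every $\pi_\ell$. Setting $x := (\chi + \mathbf{1})/2 \in \{0,1\}^n$ and using the triangle inequality across the $k$ strata, for every row index $i$ one obtains
\[
\bigl| B_i x - B_i\cdot\tfrac{1}{2}\mathbf{1}\bigr| \;=\; \tfrac{1}{2}\Bigl|\sum_{\ell=1}^k B^\ell_i\, \chi\Bigr| \;\leq\; \tfrac{1}{2}\sum_{\ell=1}^k \Dperm_k(n) \;=\; \tfrac{k}{2}\,\Dperm_k(n),
\]
which yields $\disc(B) \leq \tfrac{k}{2}\Dperm_k(n)$ and, combined with the reduction above, the desired inequality $\Dmon_k(n) \leq k\cdot \Dperm_k(n)$. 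The only subtle point will be the bookkeeping between set-system discrepancy (using $\pm 1$-colorings) and matrix discrepancy (using $0/1$-vectors and a compensating factor $\tfrac{1}{2}$) together with the factor $2$ from Theorem~\ref{thm:herdisc}; once these conventions are aligned, the threshold decomposition plus a triangle inequality across the $k$ strata do all the work.
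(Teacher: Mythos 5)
Your proposal is correct and follows essentially the same route as the paper's own proof: reduce via Lov\'asz--Spencer--Vesztergombi (Theorem~\ref{thm:herdisc}) to bounding $\disc(B)$ for a column-submatrix $B$, decompose $B$ into the threshold $1$-monotone matrices $B^1,\ldots,B^k$, read off one permutation per $B^\ell$, and apply the triangle inequality to the $\pm 1$-coloring guaranteed by $\Dperm_k(n)$. The bookkeeping between $\pm 1$-colorings and $0/1$-vectors (the factor $\tfrac12$), the factor $2$ from Theorem~\ref{thm:herdisc}, and the observation that column-submatrices remain $k$-monotone all match the paper; the only cosmetic slip is writing $\lindisc(B)$ where you mean $\lindisc(A)$ for the original matrix, but the intended chain $\lindisc(A)\le 2\herdisc(A)\le 2\max_B\disc(B)\le k\,\Dperm_k(n)$ is clear.
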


\begin{proof}
  Consider any $k$-monotone matrix $A\in\setZ^{m×n}$.  
  By virtue of Theorem \ref{thm:herdisc}, there is a
  $m× n'$ submatrix, $B$, of $A$ such that $\lindisc(A)\leq 2\cdot \disc(B)$, thus
  it suffices to show that $\disc(B) \leq \frac{k}{2}\cdot\Dperm_k(n)$. Of
  course, $B$ itself is again $k$-monotone.

Let $B^{\ell}$ also be a $m× n'$ matrix, defined by
\[
B^{\ell}_{ij} := \begin{cases} 1 & \textrm{if } B_{ij} \geq \ell \\ 0, & \textrm{otherwise.} \end{cases}
\]
The matrices $B^{\ell}$ are $1$-monotone, and the matrix $B$ decomposes into  $B = B^1 + \ldots + B^k$. 
As mentioned above, for any $\ell$, there is a (not necessarily unique) 
permutation $\pi_{\ell}$ on $[n']$
such that $B^{\ell,\pi_\ell(1)} \geq B^{\ell,\pi_\ell(2)} \geq \ldots \geq B^{\ell,\pi_\ell(n')}$, where $B^{\ell,j}$
denotes the $j$th column of $B^{\ell}$. Observe that the row vector $B_{i}^{\ell}$
is the characteristic vector of the set $\{\pi_{\ell}(1),\ldots,\pi_{\ell}(j)\}$, 
where $j$ denotes the number of ones in $B_i^{\ell}$.

Let $\chi:[n']\to\{±1\}$ be the coloring that has discrepancy at most $\Dperm_k(n)$
with respect to all permutations $\pi_1,\ldots,\pi_k$. In particular $|B_i^{\ell}\chi| \leq \Dperm_k(n)$,
when interpreting $\chi$ as a $± 1$ vector. 
Then by the triangle inequality
\[
\disc(B) \leq \frac{1}{2}\|B\chi\|_{\infty} \leq \frac{1}{2}\sum_{\ell=1}^{k} \|B^{\ell}\chi\|_{\infty} 
\leq \frac{k}{2}\Dperm_k(n).
\]
\end{proof}

Combining Theorem~\ref{thm:BinPackingGap} and Theorem~\ref{thm:kMonDiscAtMostPermDisc}, 
we conclude
\begin{corollary}
Given any { bin packing} instance with $n$ items of size bigger than $\frac{1}{k+1}$
one has
\[
  OPT \leq OPT_f + 2k\cdot\Dperm_k(n).
\]
\end{corollary}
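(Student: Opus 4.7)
The plan is to chain the two main theorems of the preceding two sections with essentially no extra work. First I would apply Theorem~\ref{thm:BinPackingGap}: since every item satisfies $s_i > \tfrac{1}{k+1}$, no feasible pattern can contain more than $k$ items, which is exactly the hypothesis needed. This yields
\[
OPT \;\leq\; OPT_f \;+\; \Bigl(1+\tfrac{1}{k}\Bigr)\cdot \Dmon_k(n).
\]

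Next I would invoke Theorem~\ref{thm:kMonDiscAtMostPermDisc}, which bounds the linear discrepancy of $k$-monotone matrices by the discrepancy of $k$ permutations, i.e.\ $\Dmon_k(n) \leq k \cdot \Dperm_k(n)$. Substituting into the previous inequality gives
\[
OPT \;\leq\; OPT_f \;+\; \Bigl(1+\tfrac{1}{k}\Bigr)\cdot k \cdot \Dperm_k(n) \;=\; OPT_f \;+\; (k+1)\cdot \Dperm_k(n).
\]
Finally, I would use the trivial estimate $k+1 \leq 2k$, valid for every integer $k \geq 1$, to obtain the claimed bound $OPT \leq OPT_f + 2k\cdot \Dperm_k(n)$.

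There is essentially no obstacle here: the two theorems do all the work, and the only thing to verify is that the hypothesis of Theorem~\ref{thm:BinPackingGap} is satisfied, which follows immediately because a pattern consisting of more than $k$ items of size strictly greater than $\tfrac{1}{k+1}$ would have total size strictly greater than $1$, contradicting feasibility. (If one wanted a slightly sharper statement one could keep the constant $k+1$ rather than rounding up to $2k$, but for the intended application with $k=3$ both formulations differ only by a small constant.)
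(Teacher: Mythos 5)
Your proposal is exactly the paper's intended argument: chain Theorem~\ref{thm:BinPackingGap} with Theorem~\ref{thm:kMonDiscAtMostPermDisc} to get $(1+\tfrac{1}{k})\cdot k = k+1 \leq 2k$. Your observation that the sharper constant $k+1$ holds is correct and matches what the combination actually yields; the paper simply rounds up for a cleaner statement.
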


In particular, this proves Theorem~\ref{thr:1}, our main result.

\subsection*{Bounding the discrepancy of permutations in terms of the discrepancy of monotone matrices}
\label{sec:BoundingPermDiscByMatDisc}

In addition we would like to note that the discrepancy of permutations
can be also bounded by the discrepancy of $k$-monotone matrices as
follows. 

\begin{theorem}\label{thm:PermutationDiscAtMostkMonDisc} 
For any $k,n\in\setN$, 
one has $\Dperm_k(n) \leq 4\cdot\Dmon_k(n).$
\end{theorem}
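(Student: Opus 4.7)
The plan is to construct, for any $k$ permutations $\pi_1,\ldots,\pi_k$ on $[n]$ achieving the discrepancy $\Dperm_k(n)$, an explicit $k$-monotone matrix $A$ whose linear discrepancy is at least $\Dperm_k(n)/4$. For each $\ell \in [k]$ and $i \in [n]$, let $M^\ell_i \in \{0,1\}^n$ denote the characteristic vector of the prefix set $\{\pi_\ell(1),\ldots,\pi_\ell(i)\}$. I let $A$ be the matrix with $kn$ rows indexed by pairs $(\ell,i) \in [k] \times [n]$, where the row indexed by $(\ell,i)$ is the vector $(\ell-1)\mathbf{1}^T + M^\ell_i$, and I order the rows lexicographically by $(\ell,i)$. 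The first step is to check that $A$ is $k$-monotone: the entries lie in $\{\ell-1,\ell\} \subseteq \{0,\ldots,k\}$, within a block of fixed $\ell$ the column values grow monotonically with $i$ as prefixes expand, and at the transition from block $\ell$ to block $\ell+1$ the column value passes from $\ell$ to either $\ell$ or $\ell+1$. This verification is routine.

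The heart of the argument is to bound $\lindisc(A)$ from below. I let the adversary choose $y = \frac{1}{2}\mathbf{1}$. For any integer vector $x \in \{0,1\}^n$, set $\chi := 2x - \mathbf{1} \in \{\pm 1\}^n$, so that $\|A(x-y)\|_\infty = \frac{1}{2}\|A\chi\|_\infty$. Writing $s := \mathbf{1}^T\chi$ and $d_{\ell,i} := M^\ell_i\chi$, the entry of $A\chi$ at row $(\ell,i)$ equals $(\ell-1)s + d_{\ell,i}$.

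The key observation is that the last row $(k,n)$ of $A$ is the constant vector $k\cdot\mathbf{1}^T$, since $M^k_n = \mathbf{1}$. Consequently $k|s| \leq \|A\chi\|_\infty$, which pins down $|s| \leq \|A\chi\|_\infty/k$. Combining with the triangle inequality $|d_{\ell,i}| \leq |(\ell-1)s + d_{\ell,i}| + (\ell-1)|s|$ yields
\[
  |d_{\ell,i}| \;\leq\; \bigl(1 + \tfrac{\ell-1}{k}\bigr)\|A\chi\|_\infty \;<\; 2\|A\chi\|_\infty
\]
for every $(\ell,i)$. Since $\chi \in \{\pm 1\}^n$ is a coloring of $[n]$ and the permutations $\pi_1,\ldots,\pi_k$ realize discrepancy $\Dperm_k(n)$, the definition of permutation discrepancy forces $\max_{\ell,i}|d_{\ell,i}| \geq \Dperm_k(n)$. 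Therefore $\|A\chi\|_\infty > \Dperm_k(n)/2$, and hence $\|A(x-y)\|_\infty > \Dperm_k(n)/4$. Since this holds for every $x \in \{0,1\}^n$, we conclude $\lindisc(A) \geq \Dperm_k(n)/4$, and so $\Dperm_k(n) \leq 4\cdot\lindisc(A) \leq 4\cdot\Dmon_k(n)$.

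The main obstacle—and the crux of the proof—is dealing with the offsets $(\ell-1)\mathbf{1}$, which in a naive analysis contribute a drift of up to $(k-1)|s|$ on top of $d_{\ell,i}$ and would cost an extra factor of $k$ in the final bound. The trick is that the same offset structure that makes $A$ $k$-monotone also forces the top row to be $k\mathbf{1}^T$, which controls $|s|$ by exactly a factor of $k$; the two effects cancel and deliver the desired $k$-independent constant $4$.
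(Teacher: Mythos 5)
Your construction is exactly the matrix the paper builds (you describe it row-by-row via $(\ell-1)\mathbf{1}^T + M^\ell_i$, the paper via cumulative counts along the concatenated string, but these are identical), you pick the same adversarial $y=\tfrac{1}{2}\mathbf{1}$, and you exploit the constant last row $k\mathbf{1}^T$ together with the same triangle inequality to control the block offsets. The argument is correct and matches the paper's proof; the only cosmetic difference is that you phrase it as a lower bound on $\lindisc(A)$ while the paper phrases the same chain of inequalities as $\disc(\pi_1,\ldots,\pi_k)\leq 4\disc(C)$ followed by $\disc(C)\leq\lindisc(C)\leq\Dmon_k(n)$.
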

\begin{proof}
We will show that for any permutations $\pi_1,\ldots,\pi_k$ on $[n]$, there is a $kn× n$ $k$-monotone matrix $C$ with $\disc(\pi_1,\ldots,\pi_k) \leq 4\cdot\disc(C)$.
Let $\Sigma \in \{ 1,\ldots,n\}^{kn}$ be the string which we obtain
by concatenating the $k$ permutations. That means
$\Sigma = (\pi_1(1),\ldots,\pi_1(n),\ldots,\pi_k(1),\ldots,\pi_k(n))$. Let $C$ the matrix where $C_{ij}$ is the number of appearances of $j\in\{1,\ldots,n\}$ among the
first $i\in\{1,\ldots,kn\}$ entries of $\Sigma$. By definition, $C$ is $k$-monotone, in fact it is the ``same'' $k$-monotone matrix as in the previous proof. 

Choose $y := (\frac 12,\dots, \frac 12)$ to have $Cy=(\frac{1}{2},1,\ldots,\frac{kn}{2})$.
Let $x\in\{0,1\}^n$ be a vector with $\|Cx- Cy\|_{\infty} \leq \disc(C)$. 
Consider the coloring $\chi: [n] \to \{ ± 1\}$ with $\chi(j) := 1$ if $x_j = 1$
and $\chi(j) := -1$ if $x_j = 0$. We claim that the discrepancy of this coloring is bounded by $4\cdot\disc(C)$ for all $k$ permutations.
Consider any prefix $ S:= \{ \pi_i(1),\ldots,\pi_i(\ell)\}$. 
Let $r = C_{(i-1)n + \ell} \in \{ i-1,i\}^n$ be the row of $C$ that corresponds to this
prefix. 
With these notations we have
\begin{eqnarray*}
  |\chi(S)| \leq 
|(r - (i-1)\mathbf{1})\cdot(2x-2y)| 
\leq 2\cdot\big( \underbrace{|r(x-y)|}_{\leq\disc(C)} + \underbrace{|k\cdot\mathbf{1}(x-y)|}_{\leq\disc(C)}\big) 
\leq 4\cdot\disc(C).
\end{eqnarray*}
Here the inequality $|(k\cdot\mathbf{1})\cdot(x-y)| \leq \disc(C)$ comes from
the fact that $k\cdot\mathbf{1} = (k,\ldots,k)$ is the last row of $C$.
\end{proof}





\section{A bound on the discrepancy of monotone matrices} 
\label{sec:kLogN-bound-for-k-mon-matrices}

Finally, we want to provide a non-trivial upper bound on the linear
discrepancy of $k$-monotone matrices. 
The result of Spencer, Srinivasan and
Tetali~\cite{DiscrepancyOfPermutations-SpencerEtAl,DiscrepancyBound-sqrtT-logN-SrinivasanSODA97}
together with Theorem~\ref{thm:kMonDiscAtMostPermDisc} yields a bound
of $\Dmon_k(n) = O(k^{3/2} \log n)$.  This bound can be reduced by a
direct proof that shares some similarities with that of
Bohus~\cite{DiscrepancyOf3Permutations-Bohus90}.  Note that
$\Dmon_k(n) \geq k/2$, as the $k$-monotone $1×1$ matrix $A = (k)$
together with target vector $y=(1/2)$ witnesses. 

\begin{theorem} \label{thm:LogBoundForLinearDisc}
Consider any $k$-monotone matrix $A\in\setZ^{n×m}$. Then
\[
  \lindisc(A) \leq 5k\cdot \log_2 (2\min\{ n,m\}).
\]
\end{theorem}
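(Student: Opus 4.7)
The plan is to combine a cost-free dimension reduction with iterated chain-based partial coloring, specialized to the chain structure of monotone matrices. First I would reduce the number of fractional coordinates from $m$ to $\min\{n,m\}$ at no discrepancy cost: the polytope $P_0 := \{z \in [0,1]^m : Az = Ay\}$ contains $y$ and is cut out by $n$ equality constraints of total rank at most $\min\{n,m\}$, so any vertex $z^{(0)}$ of $P_0$ has at most $\min\{n,m\}$ fractional coordinates and satisfies $A(z^{(0)} - y) = 0$ exactly.

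Next I would prove a partial coloring lemma that halves the fractional count at cost $2k$: for any $k$-monotone matrix $A$ and any $w \in [0,1]^m$ whose fractional support $F$ has size $f \geq 2$, there exists $w' \in [0,1]^m$ agreeing with $w$ outside $F$, having at most $\lceil f/2 \rceil$ fractional coordinates, and satisfying $\|A(w'-w)\|_\infty \leq 2k$. To prove it, let $B \in \setZ^{n \times f}$ be the submatrix of $A$ formed by the columns indexed by $F$; $B$ is still $k$-monotone with total chain variation $V := \|B_n - B_1\|_1 \leq kf$. I would greedily partition $[n]$ into consecutive intervals $[a_r, b_r]$ with $a_1 = 1$ and $b_r$ the largest index satisfying $\|B_{b_r} - B_{a_r}\|_1 \leq 2k$. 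Greediness forces $\|B_{a_{r+1}} - B_{a_r}\|_1 > 2k$ for each $r < L$, and chain monotonicity makes these increments telescope to $\|B_{a_L} - B_{a_1}\|_1 \leq V \leq kf$, giving $L \leq \lceil f/2 \rceil$. Taking any vertex of $\{u \in [0,1]^F : B_{a_r}(u - w_F) = 0 \text{ for all } r\}$ produces $w'$ with at most $L$ fractional coordinates, and for any row $i \in [a_r, b_r]$ the entrywise bound $B_i - B_{a_r} \leq B_{b_r} - B_{a_r}$ combined with $\|w'_F - w_F\|_\infty \leq 1$ gives $|A_i(w'-w)| \leq \|B_{b_r} - B_{a_r}\|_1 \leq 2k$.

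Iterating the partial coloring starting from $z^{(0)}$ performs at most $\lceil \log_2 \min\{n,m\} \rceil$ halvings at cost $2k$ each, followed by a single direct rounding of the last fractional coordinate at cost $\leq k$. Summing gives a total discrepancy of at most $2k \lceil \log_2 \min\{n,m\} \rceil + k \leq 5k \log_2(2 \min\{n,m\})$, as claimed.

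The main obstacle is the partial coloring lemma, and specifically the greedy partition bound $L \leq \lceil f/2 \rceil$. It uses essentially the entrywise chain monotonicity $A_1 \leq \cdots \leq A_n$, which makes the consecutive $\ell_1$-gaps of the greedy partition telescope into the global variation $V \leq kf$; without monotonicity one would lose a factor of $n$ here. This monotone-chain telescoping is precisely the ingredient the paper's introductory remark says is shared with Bohus's proof for three permutations.
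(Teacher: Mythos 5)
Your proof is correct, but it takes a genuinely different route from the paper's. The paper argues by a row-deletion recursion: after first using a basic-solution step to reduce to the case $m\le n$, it considers the forward gaps $d(j)=\|A_{j+1}-A_j\|_1$, observes $\sum_j d(j)\le km\le kn$, applies the pigeonhole principle to find a set $J$ of at least $n/4$ rows $j$ with $d(j)\le 2k$ and $j+1\notin J$, deletes those rows, invokes the inductive hypothesis on the remaining $\le\frac34 n$ rows, and then recovers each deleted row $j$ at extra cost $2k$ via the triangle inequality against the surviving neighbor $j+1$. Your argument instead uses the very same chain-telescoping estimate to build a greedy consecutive partition of all rows into at most $\lceil f/2\rceil$ blocks of $\ell_1$-variation $\le 2k$ (where $f$ is the current number of fractional columns), keeps only the block-start rows as equality constraints, and takes a vertex of the resulting polytope; this is an iterated-rounding / partial-coloring step in the spirit of Beck--Fiala, which halves the fractional support at cost $2k$ per round. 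So the two proofs share the key $k$-monotone telescoping fact, but the paper's recursion tracks the number of \emph{rows} while yours tracks the number of \emph{fractional columns}. Your formulation has the incidental advantage of delivering the $\min\{n,m\}$ dependence quite transparently (after the initial cost-free vertex step the fractional support has size at most the rank of $A$, hence at most $\min\{n,m\}$), whereas the paper's $m\le n$ branch, read literally, contracts the row count $n$ and relies on interleaved basic-solution reductions to keep $m$ in play.

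Two small points worth pinning down if you write this up: first, when checking $L\le\lceil f/2\rceil$, make the telescoping explicit --- by $k$-monotonicity $\|B_{a_L}-B_{a_1}\|_1=\sum_{r<L}\|B_{a_{r+1}}-B_{a_r}\|_1 > 2k(L-1)$, and the left side is at most $kf$, giving $L-1<f/2$; second, when you bound $|A_i(w'-w)|$ you are implicitly using $B_{a_r}(w'_F-w_F)=0$ to replace $B_i$ by $B_i-B_{a_r}$ before applying H\"older, so state that substitution. Neither affects correctness.
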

\begin{proof}
If $n=m=1$, $\lindisc(A) \leq \frac{k}{2}$, hence the claim is true. Let $y \in [0,1]^m$ by any vector. 
We can remove all columns $i$ with $y_i = 0$ or $y_i=1$ and then apply induction (on the size of the matrix).
Next, if $m > n$, i.e. the number of columns is bigger then the number of constraints,
then $y$ is not a basic solution of the system
\begin{eqnarray*}
Ay &=& b \\
0 \leq y_i &\leq& 1 \quad \forall i=1,\ldots,m.
\end{eqnarray*}
We replace $y$ by a basic solution $y'$ and apply induction (since $y'$ has some integer
entries and $Ay = Ay'$). 

Finally it remains to consider the case $m \leq n$. Let $a_1,\ldots,a_n$ be the
rows of $A$ and let $d(j) := \|a_{j+1} - a_j\|_1$ for $j=1,\ldots,n-1$, i.e. $d(j)$
gives the cumulated differences between the $j$th and the $(j+1)$th row. Since the columns are $k$-monotone,
each column contributes at most $k$ to the sum $\sum_{j=1}^{n-1} d(j)$. Thus
\[
  \sum_{j=1}^{n-1} d(j) \leq mk \leq nk.
\]
By the pigeonhole principle at least $n/2$ many rows $j$ have $d(j) \leq 2k$.
Take any second of these rows and we obtain a set $J \subseteq \{ 1,\ldots,n-1\}$ of size $|J| \geq n/4$
such that for every $j\in J$ one has $d(j) \leq 2k$ and $(j+1)\notin J$.
Let $A'y = b'$ be the subsystem of $n' \leq \frac{3}{4}n$ many equations, which we
obtain by deleting the rows in $J$ from $Ay = b$. We apply induction to this system and
obtain an $x\in\{0,1\}^{m}$ with 
\begin{eqnarray*}
\|A'x - A'y\|_{\infty} &\leq& 5k\cdot \log_2(2n') \\
&\leq& 5k \log_2\Big(2\cdot\frac{3}{4}n\Big) \\
&\leq& 5k\log_2(2n) - 5k\log_2\Big(\frac{4}{3}\Big) \\
&\leq& 5k \log_2(2n) - 2k.
\end{eqnarray*}
Now consider any $j\in\{1,\ldots,n\}$.
If $j\notin J$, then row $j$ still appeared in $A'y = b'$, hence $|a_j^Tx - a_j^Ty| \leq 5k\log_2(2n) - 2k$.
Now suppose $j\in J$. We remember that $j+1\notin J$, thus $|a_{j+1}^T(x-y)| \leq 5k\log_2(2n) - 2k$. But then
using the triangle inequality
\begin{eqnarray*}
  |a_{j}^Tx - a_j^Ty| &\leq& \underbrace{|(a_{j+1} - a_j)^T(x-y)|}_{\leq d(j) \leq 2k} + \underbrace{|a_{j+1}^T(x-y)|}_{\leq 5k\log_2(2n) - 2k} 
\leq 5k\cdot \log(2n).
\end{eqnarray*}
\end{proof}

\section{Lower bounds for algorithms based on rounding up items}
\label{sec:ImplicationsFromCounterexample}

Let us remind ourselves, how the classical approximation algorithms for bin packing work. 
For example in the algorithm of de la Vega and Lueker \cite{deLaVegaLueker81} one first \emph{groups} the items, i.e. 
the item sizes $s_i$ are rounded up to some $s_i' \geq s_i$ such that (1) the number of different item sizes in $s'$ 
is at most $O(1/\varepsilon^2)$ (for some proper choice of $\varepsilon$) and (2) the optimum number of bins increases only by a $(1+\varepsilon)$ factor.
Note that any solution for the new instance with bigger item sizes induces a solution with the same value for the
original instance. Then one computes a basic solution\footnote{Alternatively one can compute an optimum solution for the rounded instance by dynamic programming in time  $n^{(1/\varepsilon)^{O(1/\varepsilon)}}$, but using the LP reduces the running time to $f(\varepsilon)\cdot n$.}  $y \in \setQ_{\geq 0}^{\mathcal{P'}}$ 
to (LP) with $|\textrm{supp}(y)| \leq O(1/\varepsilon^2)$ and uses $(\lceil y_p\rceil)_{p \in \mathcal{P}'}$ as approximate solution (here $\mathcal{P}'$ are the feasible patterns
induced by sizes $s'$). 

In contrast, the algorithm of Karmarkar and Karp~\cite{KarmarkarKarp82} uses an iterative procedure, where
in each of the $O(\log n)$ iterations, the
item sizes are suitably rounded and the integral parts $\lfloor y_p\rfloor$ from a basic solution $y$ are bought. 
Nevertheless, 
both algorithms rely only on the following properties of bin-packing:
\begin{itemize}
\item \emph{Replacement property:} If $p$ is a feasible pattern (i.e. $\sum_{i \in p} s_i \leq 1$) with $j \in p$ and $s_i \leq s_j$, then 
$(p \backslash \{ j\}) \cup \{ i\} $ is also feasible. 
\item \emph{Discarding items:} Any subset $D \subseteq [n]$ of items can be greedily assigned
to at most $2s(D) + 1$ many bins ($s(D) := \sum_{i \in D} s_i$).
\end{itemize}

For a vector $x \in \setZ_{\geq 0}^{\mathcal{P}}$, we say that $x$ \emph{buys $\sum_{p \in \mathcal{P}: i \in p} x_p$ many slots for item $i$}. 
The replacement property implies that e.g. for two items $s_1 \geq s_2$; $x$ induces a feasible solution 
already if it buys no slot for item $2$, but 2 slots for the larger item $1$.

In the following we always assume that $s_1 \geq \ldots \geq s_n$. We say that an integral vector $x$ \emph{covers} the non-discarded items $[n] \setminus D$, 
if there is a map $\sigma : [n]\backslash D \to [n]$ with $\sigma(i) \leq i$ and $\sum_{p \in \mathcal{P}: i \in p} x_p \geq |\sigma^{-1}(i)|$.
Here the map $\sigma$ assigns items $i$ to a slot that $x$ reserves for an item of size $s_{\sigma(i)} \geq s_i$.
In other words, a tuple $(x,D)$ corresponds to a feasible solution if $x$ covers the items in $[n] \setminus D$
and the cost of this solution can be bounded by $\mathbf{1}^Tx + 2s(D)+1$.

It is not difficult to see\footnote{Proof sketch: Assign input items $i$ iteratively in
increasing order (starting with the largest one) to the smallest available slot. 
If there is none left for item $i$, then there are less then $i$ slots for items $1,\ldots,i$.} that for the existence of such a mapping $\sigma$ 
it is necessary (though i.g. not sufficient) that
\begin{equation}  \label{eq:CoveringInequality}
  \sum_{p \in \mathcal{P}} x_p \cdot |p \cap \{ 1,\ldots,i\}|   \geq i - |D| \quad \forall i\in[n].
\end{equation}
The algorithm of Karmarkar and Karp starts from a fractional solution $y$
and obtains a pair $(x,D)$ with $\mathbf{1}^Tx \leq \mathbf{1}^Ty$ and $\sum_{i \in D} s_i = O(\log^2 n)$ such
that $x$ covers $[n] \setminus D$. Moreover, it has the property\footnote{The Karmarkar-Karp method solves the (LP) $O(\log n)$ many times for smaller and smaller instances. This can either be done by reoptimizing the previous fractional solution or by starting from scratch. We assume here that the first option is chosen.} that $\textrm{supp}(x) \subseteq \textrm{supp}(y)$, which means that
it only uses patterns that are already contained in the 
support of the fractional solution $y$. Hence this method falls into an abstract class
of algorithms that can be characterized as follows:

\begin{definition}  \label{def:BasedOnRoundingUpItems}
We call an approximation algorithm for bin packing
\emph{based on rounding up items}, if for given item sizes $s_1,\ldots,s_n$ and a given fractional 
solution $y \in [0,1]^{\mathcal{P}}$ to (LP) it performs as follows: 
The algorithm produces a tuple $(x,D)$ such that 
(1) $x \in \setZ_{\geq 0}^{\mathcal{P}}$, (2)  $\textrm{supp}(x) \subseteq \textrm{supp}(y)$ and (3) $x$ covers $[n] \setminus D$.
We define the \emph{additive integrality gap} for a tuple $(x,D)$ as
\[
  \mathbf{1}^Tx + 2\sum_{i \in D} s_i - \mathbf{1}^Ty.
\]
\end{definition}
We can now argue that the method of Karmarkar and Karp is optimal
for all algorithms that are based on rounding up items. 
The crucial ingredient is the recent result of Newman and Nikolov~\cite{CounterexampleToBecksConjecture-NewmanNikolov2011} that there 
are 3 permutations of discrepancy $\Omega(\log n)$. 
For a permutation $\pi$ we let $\pi([i]) = \{ \pi(1),\ldots,\pi(i)\}$ be the prefix 
consisting of the first $i$
symbols. In the following, let $\mathbb{O} = \{ \ldots,-5,-3,-1,1,3,5,\ldots\}$ be the set of
odd integers.

\begin{theorem}{\cite{CounterexampleToBecksConjecture-NewmanNikolov2011}} \label{thm:NewmanNikolov}  
For every $k \in \setN$ and $n = 3^k$, there are permutations $\pi_1,\pi_2,\pi_3 : [n] \to [n]$ such that
$\disc(\pi_1,\ldots,\pi_3) \geq k/3$. Additionally, for every coloring $\chi : [n] \to \mathbb{O}$ one has:
\begin{itemize}
\item If $\chi([n]) \geq 1$, then there are $i,j$ such that $\chi(\pi_j([i])) \geq (k+2)/3$ 
\item If $\chi([n]) \leq -1$, then there are $i,j$ such that $\chi(\pi_j([i])) \leq -(k+2)/3$.
\end{itemize}
\end{theorem}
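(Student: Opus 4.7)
The plan is to prove Theorem~\ref{thm:NewmanNikolov} by an explicit recursive construction of the three permutations together with an inductive analysis of prefix discrepancies under odd-valued colorings. For $n = 3^k$, partition $[n]$ into three blocks $A_1, A_2, A_3$ of size $3^{k-1}$. Assuming recursive permutations $\pi_1^{(k-1)}, \pi_2^{(k-1)}, \pi_3^{(k-1)}$ on $[3^{k-1}]$ are already in hand, the plan is to define $\pi_j^{(k)}$ to traverse the blocks in the cyclic order $(A_j, A_{j+1}, A_{j+2})$ (indices mod $3$) and to order the $\ell$-th visited block using sub-permutation $\pi_\ell^{(k-1)}$. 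The structural property that makes this useful is that on every block $A_m$ each of $\pi_1^{(k-1)}, \pi_2^{(k-1)}, \pi_3^{(k-1)}$ is used exactly once across the three super-permutations, so the inductive hypothesis can be invoked cleanly on $\chi$ restricted to $A_m$.

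The inductive claim to maintain is exactly the two-sided statement of the theorem. The base case $k = 0$ is immediate. For the inductive step, fix a coloring $\chi$ with $\chi([n]) \geq 1$ (the other case is symmetric) and set $a_m := \chi(A_m)$; these are odd integers summing to $\chi([n])$. A first layer of \emph{easy} cases handles: if some single $a_m \geq (k+2)/3$, the prefix $A_m$ in $\pi_m^{(k)}$ suffices; if some cyclic partial sum $a_j + a_{j+1} \geq (k+2)/3$, the two-block prefix in the corresponding $\pi_j^{(k)}$ suffices; and if $\chi([n]) \geq (k+2)/3$, the full prefix works. In the remaining \emph{hard} case, at least one block $A_m$ has $a_m \geq 1$ (since $\chi([n]) \geq 1$), so the inductive hypothesis applied to $\chi|_{A_m}$ yields a sub-prefix in some sub-permutation $\pi_{\ell^*}^{(k-1)}$ on $A_m$ of discrepancy $\geq (k+1)/3$. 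By construction this sub-permutation is used on $A_m$ in a unique super-permutation $\pi_{j^*}^{(k)}$, and the corresponding prefix has $\chi$-sum equal to the offset (sum of earlier block sums in $\pi_{j^*}^{(k)}$) plus the sub-prefix discrepancy. The missing $+1/3$ increment per level must come from this offset, and one can exploit the symmetric application of the inductive hypothesis to a block with $a_m \leq -1$ to realize an offset of the correct sign.

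The main obstacle is the parity- and sign-sensitive bookkeeping of this hard case. The sub-prefix discrepancy is guaranteed to be only $\geq (k+1)/3$, so the offset must contribute the final $1/3$; since the easy cases are excluded, the block sums $a_1, a_2, a_3$ are highly constrained, and one must verify for each such constrained profile that at least one super-permutation realizes the required combination of a large positive sub-prefix with a non-negative offset (or the mirror version on the negative side). The construction may need to be fine-tuned --- for instance by choosing the mapping of sub-permutations to super-permutations more delicately, or by strengthening the inductive invariant to record the achievable positive and negative sub-prefix discrepancies on each block rather than a single bound --- so that no feasible block-sum profile can simultaneously block all three super-permutations. Once this two-sided inductive claim is established, the discrepancy bound $\disc(\pi_1,\pi_2,\pi_3) \geq k/3$ follows immediately by specializing to $\pm 1$-valued colorings.
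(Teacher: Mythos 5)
Your proposal is a blind reconstruction of the Newman--Nikolov argument, and you get the skeleton right: a recursive decomposition of $[3^k]$ into three blocks, a Latin-square assignment of sub-permutations to (super-permutation, block) pairs so that each sub-permutation is used exactly once on every block, and a two-sided inductive invariant that tracks both a positive-surplus and a negative-surplus prefix. But you leave the crux of the argument open, and it is not a routine fill-in. In your ``hard case'' the sub-prefix from the inductive hypothesis has surplus only $\geq (k+1)/3$ while the target is $(k+2)/3$. Because all surpluses are integers, the extra $1/3$ is automatic whenever $k+1\not\equiv 0\pmod 3$, but for $k\equiv 2\pmod 3$ the offset (the sum of the block values visited before the block carrying the good sub-prefix) must genuinely contribute $\geq 1$. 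The sub-permutation achieving the large surplus lives in a \emph{specific} super-permutation, and nothing forces that super-permutation to visit earlier blocks of nonnegative total; whether your particular cyclic-plus-fixed-sub-permutation wiring produces the needed sign coincidence, simultaneously on the positive and the negative side of the invariant and across all constrained profiles $(a_1,a_2,a_3)$, is exactly the content of the Newman--Nikolov lemma. Your sentences ``one can exploit the symmetric application'' and ``the construction may need to be fine-tuned'' are acknowledgments of this gap rather than a closure of it.

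Separately, note that the present paper does not prove Theorem~\ref{thm:NewmanNikolov} at all: it is imported wholesale from \cite{CounterexampleToBecksConjecture-NewmanNikolov2011}. The paper's only original contribution here is the footnote observing that the Newman--Nikolov induction already works for odd-integer colorings rather than just $\{\pm 1\}$, because the hypothesis $\chi(i)\in\{\pm 1\}$ is used only in the base case $k=1$ of their Lemma~2, which can be re-verified by hand when odd colors are allowed. So re-deriving the full recursive lower bound from scratch is a considerably larger undertaking than what the paper itself attempts, and in your attempt the essential combinatorial step is missing.
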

Note that the result of \cite{CounterexampleToBecksConjecture-NewmanNikolov2011} was 
only stated for $\{± 1\}$ colorings. But the proof uses only the fact that
the colors $\chi(i)$ are odd integers\footnote{The only point where \cite{CounterexampleToBecksConjecture-NewmanNikolov2011} uses that $\chi(i) \in \{ ± 1\}$ 
is the base case $k=1$ of the induction in the proof of Lemma~2. In fact, the 
case $\chi([3]) \geq 1$ with a single positive symbol $i \in \{ 1,2,3\}$ becomes possible if
one considers colorings with odd numbers. However, also this case can easily be seen to be true. Interestingly, coloring all multiples of 3 with $+2$ and all other numbers with $-1$
would yield a constant discrepancy.}.
This theorem does not just yield a $\Omega(\log n)$ discrepancy, but also the stronger claim that 
any  coloring $\chi$ which is balanced (i.e. $|\chi([n])|$ is small) yields a prefix of
one of the permutations which has a ``surplus'' of $\Omega(\log n)$ and 
another prefix that has a ``deficit'' of $\Omega(\log n)$. 

We begin with slightly reformulating the result. Here we make no attempt to optimize any constant.
A \emph{string} $\Sigma=(\Sigma(1),\ldots,\Sigma(q))$ is an ordered sequence; $\Sigma(\ell)$ denotes the symbol 
at the $\ell$th position and $\Sigma[\ell] = (\Sigma(1),\ldots,\Sigma(\ell))$ denotes the prefix string consisting of 
the first $\ell$ symbols. 
We write $\chi(\Sigma[\ell]) = \sum_{i=1}^{\ell} \chi(\Sigma(i))$ and $\mathbb{O}_{\geq -1} = \{ -1,1,3,5,\ldots\}$.
\begin{corollary}  \label{cor:LogNDiscString}
For infinitely many even $n$, there is a string $\Sigma \in [n]^{3n}$, each of the $n$ symbols appearing
exactly $3$ times, such that: 
for all $\chi : [n] \to \mathbb{O}_{\geq -1}$ with $\chi([n]) \leq \frac{\log n}{40}$, there is an even $\ell \in \{ 1,\ldots,3n\}$
with $\chi(\Sigma[\ell]) \leq - \frac{\log n}{20}$.
\end{corollary}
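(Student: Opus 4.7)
The plan is to reduce the claim to Theorem~\ref{thm:NewmanNikolov}. I take the three permutations $\pi_1,\pi_2,\pi_3$ on $[n_0]$ with $n_0:=3^k$ given by that theorem, embed them into permutations on $[n]$ with $n$ even by appending a single fresh symbol, and concatenate them to form $\Sigma$. Concretely, set $n:=n_0+1$, which is even since $n_0$ is odd, and define $\pi_j':[n]\to[n]$ by $\pi_j'(i):=\pi_j(i)$ for $i\leq n_0$ and $\pi_j'(n):=n$; then $\Sigma:=(\pi_1',\pi_2',\pi_3')\in[n]^{3n}$ has each symbol appearing exactly three times, and the block boundaries $n,2n,3n$ all fall at even positions.

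Given a coloring $\chi:[n]\to\mathbb{O}_{\geq -1}$ with $\chi([n])\leq \frac{\log n}{40}$, I apply Theorem~\ref{thm:NewmanNikolov} to a small modification of $\chi$ restricted to $[n_0]$. Let $t:=\chi([n_0])=\chi([n])-\chi(n)$; this is odd because $n_0$ is odd and every $\chi(i)$ is odd, and $\chi(n)\geq -1$ yields $t\leq \frac{\log n}{40}+1$. If $t\leq -1$, take $\tilde\chi:=\chi|_{[n_0]}$; otherwise fix any $j^*\in[n_0]$ and subtract the even number $t+1$ from $\chi(j^*)$, obtaining $\tilde\chi:[n_0]\to\mathbb{O}$ with $\tilde\chi([n_0])=-1$. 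The theorem supplies $j\in\{1,2,3\}$ and $i\in[n_0]$ with $\tilde\chi(\pi_j([i]))\leq -(k+2)/3$, and undoing the modification costs at most $t+1\leq \frac{\log n}{40}+2$. Setting $\ell_0:=(j-1)n+i$ and observing that $\Sigma[\ell_0]$ is the multiset consisting of $j-1$ full copies of $[n]$ together with $\pi_j([i])$, one obtains
\[
\chi(\Sigma[\ell_0])=(j-1)\chi([n])+\chi(\pi_j([i]))\leq \frac{3\log n}{40}-\frac{k+2}{3}+O(1).
\]
Using $k=\log_3 n_0$, so that $\frac{k+2}{3}>\frac{\log_2 n}{5}$ for large $n$, this is at most $-\frac{\log n}{20}-1$ once $n$ is sufficiently large.

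The only remaining subtlety, and the main obstacle of the proof, is the parity of $\ell_0$. If $\ell_0$ happens to be odd, I pass to $\ell:=\ell_0-1$: since $\chi(\Sigma(\ell_0))\geq -1$, stripping the last symbol increases the prefix sum by at most $1$, so $\chi(\Sigma[\ell])\leq \chi(\Sigma[\ell_0])+1\leq -\frac{\log n}{20}$ and $\ell$ is even. The degenerate case $\ell_0=1$ is ruled out for large $n$, since it would force $\chi(\pi_1(1))\leq -\Omega(\log n)$, contradicting $\chi\geq -1$. This parity step genuinely relies on the one-sidedness of the palette $\mathbb{O}_{\geq -1}$: the prefix walk can drop by at most $1$ per step, so reversing one step costs at most $1$ in the bound, whereas a symmetric $\pm 1$ palette would allow the walk to jump by an arbitrary positive amount when stepping backwards.
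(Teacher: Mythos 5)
Your proof is correct and follows essentially the same strategy as the paper's: set $n=3^k+1$, concatenate the three Newman--Nikolov permutations (padded with a fresh $n$th symbol), shift the restriction of $\chi$ to $[3^k]$ by an even amount to force $\chi'([3^k])\leq -1$, invoke Theorem~\ref{thm:NewmanNikolov}, and absorb the shift and a one-step parity adjustment into the slack between $\frac{k+2}{3}\approx \frac{\log_2 n}{4.75}$ and $\frac{\log n}{20}$. Your parity fix (decrementing $\ell_0$ and using $\chi\geq -1$ to bound the loss by $1$) is in fact a touch cleaner than the paper's stated ``increment $\ell$ by $1$'' step, and your choice to interleave the padding symbol at positions $n,2n,3n$ makes the identity $\chi(\Sigma[\ell_0])=(j-1)\chi([n])+\chi(\pi_j([i]))$ immediate; neither change alters the substance of the argument.
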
 

Note that this statement is in fact true for every large enough $n$ using a similar argument but we omit the proof as for us this weaker version suffices.

\begin{proof}
For some $k \in \setN$, let $\pi_1,\pi_2,\pi_3$ be the permutations on $[3^k]$ according to Theorem~\ref{thm:NewmanNikolov}.
We append the permutations together to a string $\Sigma$ of length $3\cdot3^k$. 
Additionally, for $n := 3^k+1$, we append 3 times the symbol $n$ to $\Sigma$. Thus
\[
  \Sigma = (\pi_1(1),\ldots,\pi_{1}(3^k),\pi_2(1),\ldots,\pi_2(3^k),\pi_3(1),\ldots,\pi_3(3^k),n,n,n)
\]
and $\Sigma$ has even length.

Next, let $\chi : [n] \to \mathbb{O}_{\geq-1}$ be any coloring with $|\chi([n])| \leq \frac{\log n}{40}$.
Reducing the values of at most $\frac{1}{2}(\frac{\log n}{40}+1)$ colors by 2, 
we obtain a coloring $\chi' : [n] \to \mathbb{O}_{\geq -1}$ with $\chi'([3^k])\leq-1$.
Then by Theorem~\ref{thm:NewmanNikolov} there are $j\in\{1,\ldots,3\}$ and $i \in \{ 1,\ldots,3^k\}$ 
such that $\chi'(\pi_j([i])) \leq -(k+2)/3$. For $\ell := (j-1)\cdot 3^k + i$ one has
\[
\chi(\Sigma[\ell]) \leq \chi'(\Sigma[\ell])+3(\frac{\log n}{40}+2) \leq (j-1)\cdot\underbrace{\chi'([3^k])}_{<0} + \underbrace{\chi'(\pi_j[i])}_{\leq-(k+2)/3} + 3(\frac{\log n}{20} + 2) \leq -\frac{\log n}{20}
\]
for $n$ large enough. If $\ell$ is not even, we can increment it by $1$ --- the discrepancy is
changed by at most $2$ (since we may assume that the last symbol $\Sigma(\ell)$ is negative, thus $-1$),
which can be absorbed into the slack that we still have.
\end{proof}

\subsection{A $\Omega(\log n)$ lower bound for the case of item sizes $>1/4$\label{sec:LogNLowerBound}}

In the following, for an even $n$, let $\Sigma$ be the string from 
Cor.~\ref{cor:LogNDiscString}.
We define a matrix $A \in \{ 0,1\}^{3n × n}$ such that
\[
  A_{ij} := \begin{cases} 1 & \Sigma(i) = j \\ 0 & \textrm{otherwise}. \end{cases}
\]
Note that $A$ has a single one entry per row and 3 one entries per column.

Next, we add up pairs of consecutive rows to obtain a matrix $B \in \{ 0,1,2\}^{(3/2)n × n}$.
Formally $B_i := A_{2i-1} + A_{2i}$. 
We define a bin packing instance by choosing item sizes $s_i := \frac{1}{3} - \varepsilon i$ for items $i=1,\ldots,\frac{3}{2}n$ with $\varepsilon := \frac{1}{20n}$.
Then $\frac{1}{3} > s_1 > s_2 > \ldots > s_{(3/2)n} > \frac{1}{4}$.
Furthermore we consider $B$ as our pattern matrix and $y := (\frac{1}{2},\ldots,\frac{1}{2})$ 
a corresponding feasible fractional solution. Note that $By = \mathbf{1}$.

In the following theorem we will assume for the sake of contradiction that this instance
admits a solution $(x,D)$ respecting Def.~\ref{def:BasedOnRoundingUpItems} with additive gap $o(\log n)$. It
is not difficult to see, that then $|D| = o(\log n)$ and $|\mathbf{1}^Tx - \mathbf{1}^Ty| = o(\log n)$.
The integral vector $x$ defines a coloring $\chi : [n] \to \mathbb{O}_{\geq -1}$ via the equation $x_i = y_i + \frac{1}{2}\chi(i)$.
This coloring is balanced, i.e. $|\chi([n])| = o(\log n)$. Thus there is a prefix string $\Sigma[\ell]$ 
with a deficit of $\chi(\Sigma[\ell]) \leq -\Omega(\log n)$. 
This corresponds to $x$ having $\ell/2 - \Omega(\log n)$ slots for the 
largest $\ell/2$ items, which implies that $x$ cannot be feasible. Now the proof in detail: 

\begin{theorem} \label{thm:logNLowerBoundForLargeItems}
There is no algorithm for bin packing,
based on rounding up items which achieves an additive integrality gap of $o(\log n)$
for all instances with $s_1,\ldots,s_n > 1/4$.
\end{theorem}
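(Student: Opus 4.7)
The plan is a contradiction argument that plugs the bad string $\Sigma$ of Corollary~\ref{cor:LogNDiscString} into the instance constructed right before the theorem: items $s_1>\ldots>s_{3n/2}$ all in $(1/4,1/3)$, pattern matrix $B$, and fractional solution $y=(\tfrac12,\ldots,\tfrac12)$ with $\mathbf{1}^Ty=n/2$ and $By=\mathbf{1}$. I fix a small absolute constant $c>0$ (say $c=\tfrac{1}{200}$) and suppose, for contradiction, that some algorithm based on rounding up items produces a pair $(x,D)$ with
$$\mathbf{1}^Tx+2\sum_{i\in D}s_i-\mathbf{1}^Ty\leq c\log n.$$
Since both non-negative summands on the left are then controlled, I get $\mathbf{1}^Tx-n/2\leq c\log n$ and $\sum_{i\in D}s_i\leq c\log n/2$. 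Because every item has size $>1/4$, the second estimate forces $|D|<\tfrac{\log n}{40}$.

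The condition $\textrm{supp}(x)\subseteq\textrm{supp}(y)$ forces $x\in\setZ_{\geq 0}^n$, so I define a coloring $\chi:[n]\to\mathbb{O}_{\geq -1}$ by $\chi(j):=2x_j-1$. Its total is $\chi([n])=2(\mathbf{1}^Tx-n/2)\leq 2c\log n\leq\tfrac{\log n}{40}$, which is exactly the hypothesis required by Corollary~\ref{cor:LogNDiscString}. The corollary then produces an even $\ell\in\{1,\ldots,3n\}$ with $\chi(\Sigma[\ell])\leq-\tfrac{\log n}{20}$.

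Set $i^*:=\ell/2$. Because row $k$ of $B$ encodes positions $2k-1,2k$ of $\Sigma$, the quantity $c_j:=\sum_{k=1}^{i^*}B_{kj}$ counts the occurrences of symbol $j$ in $\Sigma[\ell]$, so $\sum_j c_j=2i^*$ and
$$\sum_{k=1}^{i^*}B_kx=\sum_j c_jx_j=\sum_j c_j\cdot\tfrac{1+\chi(j)}{2}=i^*+\tfrac12\chi(\Sigma[\ell])\leq i^*-\tfrac{\log n}{40}.$$
Provided $i^*<3n/2$, every entry $B_{kj}$ with $k\leq i^*$ lies in $\{0,1\}$, so this sum coincides with the covering quantity $\sum_p x_p\,|p\cap\{1,\ldots,i^*\}|$ appearing in the necessary covering inequality~\eqref{eq:CoveringInequality}. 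Instantiating that inequality at $i=i^*$ yields $i^*-|D|\leq i^*-\tfrac{\log n}{40}$, i.e.\ $|D|\geq\tfrac{\log n}{40}$, contradicting the earlier bound on $|D|$.

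The obstacle I foresee is controlling the entries of $B$ that equal $2$, which are caused by the appended tail $n,n,n$ of $\Sigma$; these only live in the last row of $B$, so they would only distort the identity between $\sum_{k=1}^{i^*}B_kx$ and $\sum_p x_p|p\cap\{1,\ldots,i^*\}|$ when $i^*=3n/2$. Inspecting the proof of Corollary~\ref{cor:LogNDiscString} reveals that the discrepancy witness $\ell$ comes from a prefix inside the concatenated permutations (of length at most $3n-3$), and after the parity increment one has $\ell\leq 3n-2$; hence $i^*\leq\lfloor(3n-2)/2\rfloor<3n/2$, and the 2-entries never enter the calculation.
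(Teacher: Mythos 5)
Your proof follows the paper's proof of Theorem~\ref{thm:logNLowerBoundForLargeItems} step for step: the same instance, the same coloring $\chi(j)=2x_j-1$, the same invocation of Corollary~\ref{cor:LogNDiscString} to find a deficient prefix, and the same translation of that deficit into a violation of the covering inequality~\eqref{eq:CoveringInequality}. Your extra care about where the 2-entries of $B$ can live (and the check that the witness $\ell$ stays below $3n$) is a useful sanity check that the paper leaves implicit, and the final accounting identity $\sum_{k\le i^*}B_kx = i^* + \tfrac12\chi(\Sigma[\ell])$ is exactly the paper's.

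There is, however, one real gap, and it is precisely where you deviate from the paper's bookkeeping. You assert that $\mathbf{1}^Tx-\mathbf{1}^Ty$ and $2\sum_{i\in D}s_i$ are ``both non-negative summands'' and then bound each of them separately by $c\log n$. The second is non-negative, but the first is not: if items are discarded, $x$ may legitimately open fewer bins than $\mathbf{1}^Ty=n/2$, and the only a priori lower bound is $\mathbf{1}^Tx\geq \tfrac13\big(\tfrac32 n-|D|\big)=n/2-|D|/3$. Consequently, $\sum_{i\in D}s_i\leq c\log n/2$ does not follow from your assumption, and the chain that gives $|D|<\tfrac{\log n}{40}$ breaks. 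Redoing it correctly, one gets
\[
 c\log n \;\geq\; (\mathbf{1}^Tx-n/2) + 2s(D) \;\geq\; -\tfrac{|D|}{3} + \tfrac{|D|}{2} \;=\; \tfrac{|D|}{6},
\]
hence only $|D|\leq 6c\log n$. With your stated choice $c=\tfrac1{200}$ this is $|D|\leq\tfrac{3}{100}\log n$, which is \emph{larger} than the threshold $\tfrac{\log n}{40}$ needed for the final contradiction, so the argument as written does not close. The repair is easy --- choose $c$ smaller (say $c\le\tfrac1{1000}$), or, as the paper does, first establish $\mathbf{1}^Tx+2s(D)\geq \mathbf{1}^Ty+|D|/6$ to bound $|D|$, then separately deduce $\mathbf{1}^Tx\geq\mathbf{1}^Ty-\tfrac{\log n}{300}$ --- but as submitted the ``both non-negative'' shortcut is false and the constants it produces are insufficient.
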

\begin{proof}
Let $(x,D)$ be a solution to the constructed instance with $\textrm{supp}(x) \subseteq \textrm{supp}(y)$ such that $x$
is integral and covers the non-discarded items $[\frac{3}{2}n] \setminus D$. 
For the sake of contradiction assume that
\[
  \mathbf{1}^Tx + 2\sum_{i \in D} s_i \leq \mathbf{1}^Ty + o(\log n).
\]
Clearly we may assume that $\mathbf{1}^Tx \leq \mathbf{1}^Ty + \frac{1}{600} \log n$, otherwise there is nothing to show.
Note that $\mathbf{1}^Tx + 2s(D) \geq \frac{(3/2)n-|D|}{3} + 2\cdot\frac{|D|}{4} = \mathbf{1}^Ty + \frac{|D|}{6}$ (since $\frac{1}{3}>s_i>\frac{1}{4}$) 
and thus $|D| \leq \frac{1}{100}\log n$.
Furthermore $\mathbf{1}^Tx \geq \frac{(3/2)n-|D|}{3} \geq  \mathbf{1}^Ty - \frac{1}{300} \log n$. We can summarize:
\begin{equation} \label{eq:ConditionForGoodX}
 |\mathbf{1}^Tx-\mathbf{1}^Ty| \leq \frac{\log n}{300} \quad \textrm{ and } \quad \sum_{i'=1}^i B_{i'}x \geq i - \frac{\log n}{100}  \quad \forall i \in [\frac{3}{2}n]
\end{equation}
We will now lead this to a contradiction.
Recall that every symbol $i \in \{ 1,\ldots,n\}$ corresponds to a column of matrix $B$. 
Define a coloring $\chi : [n] \to \mathbb{O}_{\geq -1}$ such that $x_i = \frac{1}{2} + \frac{1}{2}\chi(i)$.  
Note that indeed the integrality of $x_i$ implies that $\chi(i)$ is an odd integer.
Furthermore $|\chi([n])| = 2\cdot |\mathbf{1}^Tx - \mathbf{1}^Ty| \leq \frac{1}{150} \log n$. Using Cor.~\ref{cor:LogNDiscString}
there is a $2q \in \{ 1,\ldots,3n\}$ such that $\chi(\Sigma[2q]) \leq -\frac{\log n}{20}$.
The crucial observation is that by construction $\chi(\Sigma[2q]) = \sum_{i=1}^{q} B_i\chi$.
Then the number of slots that $x$ reserves for the largest $q$ items is
\[
  \sum_{i=1}^{q} B_{i}x = \underbrace{\sum_{i=1}^{q} B_iy}_{=q} + \frac{1}{2}\sum_{i=1}^{q} B_i\chi = q + \frac{1}{2} \underbrace{\chi(\Sigma[2q])}_{\leq -\frac{\log n}{20} } \leq q - \frac{\log n}{40}.
\]
Thus $x$ cannot cover items $[\frac{3}{2}n] \setminus D$.
\end{proof}

\subsection{A $\Omega(\log^2 n)$ lower bound for the general case}

%

Starting from the pattern matrix $B$ defined above, we will construct 
another pattern matrix $C$ and a vector $b$ of item multiplicities 
such that for the emerging instance even a $o(\log^2 n)$ additive integrality
gap is not achievable by just rounding up items. 

Let $\ell := \log n$ be a parameter. We will define 
\emph{groups} of items for every $j=1,\ldots,\ell$, where group $j \in \{ 1,\ldots,\ell\}$
contains $\frac{3}{2}n$ many different item types; each one 
with multiplicity $2^{j-1}$.
Define 
\[
  C := \begin{pmatrix}
       2^0\cdot B & \mathbf{0} & \mathbf{0} & \ldots & \mathbf{0} \\
       \mathbf{0} & 2^1\cdot B & \mathbf{0} & \ldots & \mathbf{0} \\
       \mathbf{0} & \mathbf{0} & 2^2 \cdot B & \ldots & \mathbf{0} \\
    \vdots & \vdots & \vdots & \ddots & \vdots \\
    \mathbf{0} & \mathbf{0} & \mathbf{0} & \ldots & 2^{\ell-1}\cdot B 
\end{pmatrix} \quad \textrm{and} \quad
b = \begin{pmatrix} 2^0\cdot\mathbf{1} \\
2^1\cdot\mathbf{1} \\
2^2\cdot\mathbf{1} \\
\vdots  \\
2^{\ell-1}\cdot\mathbf{1} \\
\end{pmatrix},
\]
thus $C$ is an $\frac{3}{2}n\ell × n\ell$ matrix and $b$ is a $\frac{3}{2}n\ell$-dimensional vector.
In other words, each group is a scaled clone of the instance in the previous section.
Choosing again $y := (1/2,\ldots,1/2) \in \setR^{\ell n}$ as fractional solution, 
we have $Cy = b$. Note that allowing multiplicities is just for 
notational convenience and does not make the problem
setting more general.  
Since the total number of items is still bounded by 
a polynomial in $n$ (more precisely $\mathbf{1}^Tb \leq O(n^2)$), each item $i$ could still be replaced by $b_i$
items of multiplicity $1$. 
Let $s_i^j := \frac{1}{3}\cdot(\frac{1}{2})^{j-1} - i\cdot\varepsilon$ the size of the $i$th item in group $j$
for  $\varepsilon := \frac{1}{12n^3}$.
Note that the size contribution of each item type is $2^{j-1}\cdot s_i^j \in [\frac{1}{3}, \frac{1}{3}-\frac{1}{n}]$.
Abbreviate the number of different item types by $m := \ell\cdot \frac{3}{2}n$. 

\begin{theorem} \label{thm:log2NLowerBound}
There is no algorithm for bin packing which is based on rounding up items
and achieves an additive integrality gap of $o(\log^2 n)$.
\end{theorem}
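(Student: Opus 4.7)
The plan is to extend the discrepancy-based argument of Theorem~\ref{thm:logNLowerBoundForLargeItems} across all $\ell = \log n$ groups so that each block contributes $\Omega(\log n)$ to the additive gap, for a total of $\Omega(\log^2 n)$.

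Since $\textrm{supp}(y)$ is block-diagonal (no pattern in $C$ mixes items of different groups) and $\textrm{supp}(x) \subseteq \textrm{supp}(y)$, the integer vector decomposes as $x = (x^{(1)}, \ldots, x^{(\ell)})$ with $x^{(j)} \in \setZ_{\geq 0}^n$ using only the $j$-th block of patterns. Let $\delta_j := \mathbf{1}^T x^{(j)} - n/2$ and $d_j := |D_j|$ (discards in group $j$, counted with multiplicity). Since $s_i^j \geq \frac{1}{4 \cdot 2^{j-1}}$ for the chosen $\varepsilon$,
\[
g := \mathbf{1}^T x + 2 s(D) - \mathbf{1}^T y \;\geq\; \sum_{j=1}^{\ell} \delta_j + \sum_{j=1}^{\ell} \frac{d_j}{2 \cdot 2^{j-1}}.
\]
A short exchange argument lets me assume WLOG that $\delta_j \geq 0$ for every $j$: if $\delta_j < 0$, adding $-\delta_j$ copies of patterns to $x^{(j)}$ costs $-\delta_j$ bins but provides $3 \cdot 2^{j-1} \cdot (-\delta_j)$ new slots, which can be cascaded through the groups $j^* \leq j$ to un-discard items of size $\geq s_j$, saving at least $\tfrac{3}{2}(-\delta_j)$ in the $2s(D)$ term.

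For each group $j$ define $\chi_j \colon [n] \to \mathbb{O}_{\geq -1}$ by $\chi_j(c) := 2 x^{(j)}_c - 1$ (odd and $\geq -1$ by integrality and $x \geq 0$), so that $\chi_j([n]) = 2 \delta_j$. Apply Corollary~\ref{cor:LogNDiscString} to each $\chi_j$ to obtain a dichotomy: either \emph{(A)} $\chi_j([n]) > \log n / 40$, so $\delta_j > \log n/80$ (put $j$ into a set $J_1$); or \emph{(B)} there exists an even $2q_j$ with $\chi_j(\Sigma[2q_j]) \leq -\log n/20$ (put $j$ into $J_2 := [\ell] \setminus J_1$). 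In case~(B), using the identity $\sum_{i=1}^{q_j} B_i \chi_j = \chi_j(\Sigma[2q_j])$ and $B y^{(j)} = \mathbf{1}$ one obtains $\sum_{i=1}^{q_j} B_i x^{(j)} \leq q_j - \log n/40$; feasibility of the assignment $\sigma$ on the sorted prefix consisting of all items in groups $1,\ldots,j-1$ plus the first $q_j$ item-types of group $j$ then translates to
\[
\sum_{j' \leq j} d_{j'} \;+\; 3 \sum_{j' < j} 2^{j'-1} \delta_{j'} \;\geq\; 2^{j-2} \cdot \frac{\log n}{20}.
\]

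To close, I split on $|J_1|$. If $|J_1| \geq \ell/2$, then $g \geq \sum_{j \in J_1} \delta_j \geq |J_1| \cdot \frac{\log n}{80} \geq \frac{\log^2 n}{160}$. Otherwise $|J_2| > \ell/2$; I sum the case~(B) constraints with weights $w_{j_0} := \frac{1}{4 \cdot 2^{j_0 - 1}}$. The tails $T_j := \sum_{j_0 \in J_2, j_0 \geq j} w_{j_0}$ satisfy $T_j \leq \frac{1}{2 \cdot 2^{j-1}}$ (matching the coefficient of $d_j$ in $g$) and $3 \cdot 2^{j-1} T_{j+1} \leq \frac{3}{4}$ (which, together with $\delta_j \geq 0$, is absorbed by the coefficient of $\delta_j$ in $g$). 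Hence the weighted left-hand side is $\leq g$, while the right-hand side totals $|J_2| \cdot \frac{\log n}{320} \geq \frac{\log^2 n}{640}$. In either case $g = \Omega(\log^2 n)$, contradicting the assumption that $g = o(\log^2 n)$.

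The main obstacle is the cross-group coupling: the Case~(B) deficit forced in group $j$ may be paid for by extra bins purchased in any earlier group $j' < j$, so the per-group $\Omega(\log n)$ bounds cannot simply be summed. The weights $w_{j_0} = \frac{1}{4 \cdot 2^{j_0 - 1}}$ are calibrated precisely so that both the ``three slots per bin'' leverage of each $\delta_{j'}$ and the $\approx \frac{2}{3 \cdot 2^{j-1}}$ per-discard cost of items in group $j$ are simultaneously absorbed by $g$; this is the technical step that requires the WLOG reduction to $\delta_j \geq 0$.
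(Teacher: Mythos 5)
Your reduction is structurally appealing — the per‑group deficit constraint
\[
\sum_{j'\leq j}d_{j'} + 3\sum_{j'<j}2^{j'-1}\delta_{j'} \;\geq\; 2^{j-2}\tfrac{\log n}{20}
\]
for $j\in J_2$ is correct (it follows from the Hall-type cover condition applied to the prefix of all items in groups $1,\ldots,j-1$ together with the $q_j$ largest item types of group $j$, and cancellation of the $\frac{3n}{2}2^{j'-1}$ terms), and the tail weights $w_{j_0}=\tfrac{1}{4\cdot 2^{j_0-1}}$ are indeed calibrated so that $T_j\leq\tfrac{1}{2\cdot 2^{j-1}}$ and $3\cdot 2^{j-1}T_{j+1}\leq\tfrac34$. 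But both halves of your final case split quietly lean on the assumption $\delta_j\geq 0$: in the $|J_1|\geq\ell/2$ branch you drop $\sum_{j\notin J_1}\delta_j$ from $g=\sum_j\delta_j+2s(D)$, and in the $|J_2|>\ell/2$ branch the coefficient of $\delta_j$ on the weighted LHS is $\leq 3/4<1$, which only dominates the coefficient $1$ in $g$ when $\delta_j\geq 0$; a negative $\delta_j$ flips that comparison.

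The claimed exchange argument establishing that WLOG does not go through. Adding $-\delta_j$ bins to group $j$ always raises $\mathbf{1}^Tx$ by $-\delta_j$, and the new slots (all of size $\approx\tfrac{1}{3\cdot 2^{j-1}}$) can only be used to un-discard items from group $j$ or smaller; there is no guarantee such items are in $D$. For example, a deficit $\delta_j<0$ in group $j$ can be absorbed entirely by surplus slots $\delta_{j'}>0$ in earlier groups, with $D=\emptyset$ — then your modification strictly increases the gap, and repairing it requires \emph{also} deleting bins from the surplus groups and re-deriving a valid slot assignment, which is exactly the kind of cascading bookkeeping your sketch waves at. The paper sidesteps this entirely: it first discards all items mapped into slots of a strictly larger group (this ``cross-group waste'' is bounded by $O(\log^2 n)$ via the packing-density inequality $\mathbf{1}^Tx\geq s([m]\setminus D)+\mathrm{waste}$), which forces the cover to be group-local so that $\delta_j\geq -|D'_j|/(3\cdot 2^{j-1})$ and hence $\sum_j|\delta_j|$ is controlled, and then finds by Markov's inequality a single group $j$ in which both $|\delta_j|$ and $s(D'_j)$ are $O(\log n)$, invoking the one-group contradiction from Theorem~\ref{thm:logNLowerBoundForLargeItems}. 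If you want to keep your weighted-sum route you would need a comparable preprocessing step (e.g.\ the same cross-group-discard trick) to obtain a bound of the form $\sum_j\max(-\delta_j,0)=O(\log^2 n)$, which you could then subtract off before comparing to $g$; as written, the WLOG is a genuine gap. (There is also a minor factor-of-2 slip: each $w_{j_0}\cdot 2^{j_0-2}\tfrac{\log n}{20}$ equals $\tfrac{\log n}{160}$, not $\tfrac{\log n}{320}$, but this is immaterial.)
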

\begin{proof}
Let $(x,D)$ be arbitrary with $\textrm{supp}(x) \subseteq \textrm{supp}(y)$ such that $x$
is integral and covers $[m] \setminus D$ (considering $D$ now as a multiset). 
Assume for the sake of contradiction that 
\[
  \mathbf{1}^Tx + 2\sum_{i \in D} s_i \geq \mathbf{1}^Ty + o(\log^2 n).
\]
As in Theorem~\ref{thm:logNLowerBoundForLargeItems}, we can assume 
that $|\mathbf{1}^Tx - \mathbf{1}^Ty| \leq \frac{1}{10000} \log^2 n$.
First, observe that the bins in $y$ are packed pretty tight, i.e. $|\mathbf{1}^Ty - s([m])| \leq 1$.
If an item $i$ is covered by a slot for a larger item $i'$, then this
causes a \emph{waste} of $s_{i'} - s_i$, which is not anymore available for any other item.
The additive gap is defined as
$\mathbf{1}^Tx + 2s(D) - \mathbf{1}^Ty \geq s([m]/D) + \textrm{waste} + 2s(D) - s([m])-1 =  \textrm{waste} + s(D) - 1$.
Thus both, the waste and the size of the discarded items $s(D)$ must be bounded by $\frac{1}{10000}\log^2 n$.

Observe that the items in group $j-1$ are at least a factor $3/2$ larger than the 
items in group $j$.  
In other words, every item $i$ from group $j$ which is mapped to group $1,\ldots,j-1$
generates a waste of at least $\frac{1}{2} s_i^j$. Thus the total size of items which are
mapped to the slot of an item in a larger group is bounded by $\frac{1}{5000} \log^2 n$.
Thus there lies no harm in discarding these items as well --- let $D'$ be the union
of such items and $D$.
Then $s(D') \leq s(D) + \frac{1}{5000}\log^2 n \leq \frac{1}{3000}\log^2 n$.

For group $j$, let $D_{j}' \subseteq D'$ be the discarded items in the $j$th group and
let $x^j$ ($y^j$, resp.) be the vector $x$ ($y$, resp.), restricted to 
the patterns corresponding to group $j$. In other words, $x=(x^1,\ldots,x^{\ell})$ and $y=(y^1,\ldots,y^{\ell})$.
By $x_i^j \in \setZ_{\geq 0}$ we denote the
entry belonging to column $(\mathbf{0},\ldots,\mathbf{0},2^{j-1} B^i,\mathbf{0},\ldots,\mathbf{0})$ in $C$.
Pick $j \in \{ 1,\ldots,\ell\}$ uniformly 
at random, then $E[|\mathbf{1}^Tx^j - \mathbf{1}^Ty^j|] \leq \frac{1}{10000} \log n$
and $E[s(D_j')] \leq \frac{1}{3000} \log n$. By Markov's inequality, there must
be an index $j$, such that $|\mathbf{1}^Tx^j - \mathbf{1}^Ty^j| \leq \frac{1}{1000}\log n$
and $s(D_j') \leq \frac{\log n}{2000}$. Recall that $|D_j'| \leq 4\cdot2^{j-1} s(D_j') \leq 2^{j-1}\frac{\log n}{500}$. Since $x^j$ covers all items in group $j$ (without $D_j'$), we obtain
\[
  \sum_{i'=1}^i 2^{j-1} B_{i'}x^j \geq i\cdot2^{j-1} - 2^{j-1}\frac{\log n}{500} \quad \forall i=1,\ldots,\frac{3}{2}n
\]
After division by $2^{j-1}$, this implies Condition \eqref{eq:ConditionForGoodX}, 
which leads to a contradiction.

The claim follows since the number of items counted with multiplicity is bounded by $O(n^2)$, thus $\log^2 (\mathbf{1}^Tb) = \Theta(\log^2 n)$.
\end{proof}
%
%
%



\begin{remark}
Note that the additive integrality gap for the constructed instance is still small, once arbitrary patterns
may be used. For example a First Fit Decreasing assignment will produce a solution
of cost exactly $OPT_f$. This can be partly fixed by slightly increasing the item sizes. 
For the sake of simplicity consider the
construction in Section~\ref{sec:LogNLowerBound} and observe that the used patterns 
are still feasible if the items corresponding to the first permutation have sizes
in the range $[\frac{1}{3}+10\delta,\frac{1}{3} + 11\delta]$ and the items corresponding to the 2nd
and 3rd permutation have item sizes in $[\frac{1}{3}-7\delta,\frac{1}{3}-6\delta]$ (for a small constant $\delta >0$).
Then a First Fit Decreasing approach will produce a $\Omega(n)$ additive gap.
\end{remark}


\bibliographystyle{plain}
\bibliography{discrepancy}
\end{document}